\documentclass{article}

\RequirePackage{amsthm,amsmath,amsfonts,amssymb}
\RequirePackage[authoryear]{natbib}
\usepackage{hyperref}
\RequirePackage{graphicx}
\usepackage{float}
\usepackage{algorithm,algpseudocode}
\usepackage{booktabs}
\usepackage{empheq}
\usepackage{cancel}
\usepackage[utf8]{inputenc}
\usepackage{listings}
\usepackage{mathrsfs}
\usepackage{array}
\usepackage{authblk}
\usepackage{orcidlink}

\usepackage{todonotes}
\usepackage{subfigure}
\usepackage{adjustbox}
\usepackage{multirow}
\usepackage{amsmath}
\usepackage{bbm}

\newcommand{\bs}{\boldsymbol}

\newcommand{\x}{\mathbf{x}}

\newcommand{\K}{\mathbf{K}}

\newcommand{\C}{\mathbf{C}}

\newcommand{\Q}{\mathbf{Q}}

\newcommand{\U}{\mathbf{U}}

\newcommand{\X}{\mathbf{X}}

\newcommand{\T}{{\!\mathsf{T}}}  %

\DeclareMathOperator{\logit}{logit}
\newcommand{\dif}{\ensuremath{\mathrm{d}}}

\theoremstyle{plain}

\newtheorem{theorem}{Theorem}[section]

\newtheorem{proposition}[theorem]{Proposition}

\theoremstyle{definition}
\newtheorem{definition}[theorem]{Definition}
\newtheorem{remark}[theorem]{Remark}
\newtheorem*{example}{Example}

\title{Tail Halos: The Covariate Dual of the Tail}
\author[1]{Gianmarco Callegher\,\orcidlink{0009-0001-1020-7887}}
\author[2]{Migue de Carvalho\,\orcidlink{0000-0003-3248-6984}}

\affil[1]{University of G\"ottingen}
\affil[1]{University of Edinburgh}
\date{\today}

\begin{document}

\maketitle

\begin{abstract}
Extreme events are often studied through regression models describing how covariates modify tail parameters. In many applications, however, interest lies directly in identifying the parts of the covariate space associated with tail events and with elevated exceedance probabilities. We formalize these ideas through \emph{tail halos}, covariate-defined regions associated with extremes. Risk halos, when they exist, are minimum-mass covariate regions accounting for a prescribed fraction of exceedance probability, while favorable halos collect covariate values where conditional exceedance probability exceeds its marginal level. To make these set-valued objects interpretable beyond low dimensions, we introduce halo-induced covariate laws and, when densities exist, corresponding \emph{halo densities}. We develop a Bayesian structured additive distributional regression framework based on extended generalized Pareto marginals, covariate-dependent copulas, and spike-and-slab effect selection to learn marginal and joint halo-induced covariate laws while propagating posterior uncertainty. The approach extends Bayesian shrinkage methods for conditional tail modeling to multivariate bulk-and-tail modeling, nonlinear effect selection, and covariate-region inference. Simulations show recovery of relevant nonlinear effects and increasingly accurate finite-sample halo representations as sample size grows. An application to PM$_{2.5}$ and NO$_2$ extremes in Edinburgh reveals distinct environmental and temporal configurations associated with high pollution.
\end{abstract}

\section{Introduction}\label{introduction}
Extreme events are rarely extreme in isolation. Heatwaves depend on atmospheric conditions, air-pollution episodes on weather and human activity, and financial crashes on market conditions that evolve over time. A central goal of extreme-value statistics is therefore not only to estimate how rare such events are, but also to understand the covariate configurations under which they occur. This is difficult because the relevant information lies in the tail, where data are scarce and extrapolation is unavoidable. The field of statistics of extremes provides the natural framework for this task, with methods designed to model the probability of events beyond the range of the observed data \citep{beirlant2004, coles2001, HandbookExtremes2026, dehaan2006}.

Regression models for extremes usually approach the problem by allowing the parameters of an extreme-value distribution---for example, a generalized extreme value, generalized Pareto, or extended generalized Pareto distribution---to depend on covariates \citep{chavez2005, davison1990, decarvalho2021a, decarvalho2022, decarvalho2026regression, wang2009}. In multivariate settings, related ideas have been used to let extremal dependence vary with covariates \citep{castro2018, mhalla2019b, lee2024}. This parameter-first viewpoint is powerful and interpretable, but it leaves an important question only indirectly answered. An analyst may know that a tail index, scale parameter, or dependence coefficient changes with temperature, wind, time of day, or season, while still not knowing which covariate configurations carry most exceedance probability, or where exceedances are more likely than they are on average.

Variable selection is especially important in this setting, because covariates may affect different parts of the conditional distribution in different ways. The extreme-value Bayesian Lasso of \citet{decarvalho2021a} provides a Bayesian variable-selection framework for learning covariate effects on the conditional left and right tails of a univariate response. Our approach builds on this idea, but replaces the Lasso-type shrinkage with spike-and-slab selection for nonlinear structured additive effects, extends the construction to multivariate responses through covariate-dependent copulas, and uses the resulting fitted tail probabilities to define covariate regions associated with extremes.

In this paper, we make this covariate-side question the inferential target. We introduce \emph{tail halos}, set-valued summaries of the covariate space associated with extreme responses. We study two complementary versions. A \emph{risk halo} is a covariate region of minimum probability mass among those accounting for at least an $\alpha$-fraction of all responses exceeding a threshold $u$. Under an atom-free condition it exists, is unique up to null sets, and is a superlevel set of the conditional exceedance probability. The \emph{favorable halo}, motivated by the notion of favorable events \citep[e.g.,][]{kramer2005}, is the region where the conditional exceedance probability exceeds its marginal level. Risk halos therefore identify where tail mass is concentrated, while favorable halos identify where tail risk is elevated relative to the background.

A further contribution is to attach to each halo a halo-induced covariate law and, when densities exist, the corresponding \emph{halo densities}. This is a central part of the construction. The halo itself is a subset of covariate space, but such sets quickly become difficult to display, compare, or communicate when the covariate dimension grows. Halo densities turn the set-valued object into ordinary distributional summaries of the covariates inside the halo, while preserving the population target. This provides the theoretical bridge between the definitions of the halos and the empirical summaries used later for estimation and uncertainty quantification.

We then develop a Bayesian structured additive distributional regression framework for learning halo-induced covariate laws from data. The marginal model is based on the extended generalized Pareto distribution \citep{naveau2016, papastathopoulos2013}, thereby including the extended generalized Pareto regression model of \citet{decarvalho2021a} as a special case. A key part of the construction is a spike-and-slab prior on nonlinear P-spline effects, which performs effect selection separately for each marginal and copula parameter. This lets the model distinguish covariates that affect marginal tail behavior from those that affect extremal dependence. The computational strategy is connected to recent work on stochastic variational inference for structured additive distributional regression \citep{callegher2025stochasticvariationalinferencestructured}, but the inferential target here is different: we propagate posterior uncertainty from fitted marginal and joint tail probabilities to the covariate laws induced by the corresponding halos. In the multivariate case, the model combines conditional EGPD marginals with a covariate-dependent copula, allowing covariates to affect marginal tails without necessarily affecting joint tail dependence, and vice versa.

The remainder of the paper is organized as follows. Section~\ref{model}
introduces risk and favorable halos, together with the halo-induced covariate laws used to
visualize them. Section~\ref{model2} develops the Bayesian structured additive
distributional regression model used to learn these objects from data.
Section~\ref{sec:simulation} reports the main findings of a simulation study. A
data application is presented in Section~\ref{sec:application}. We conclude in
Section~\ref{sec:discussion} with a discussion and directions for future work.

\section{Tail halos: theory and examples}\label{model}
\subsection{Risk halo}
We first introduce the notion of a risk halo in the univariate-response setting. Let $Y$ take values in $\mathcal{Y} \subseteq \mathbb{R}$ and let $\mathbf{X} \sim P_{\mathbf{X}}$ take values in a Borel set $\mathcal{X} \subseteq \mathbb{R}^p$. For a threshold $u \in \mathcal{Y}$, assume that
\[
  p_u := P(Y>u)>0
\]
and define the finite tail measure
\begin{equation}\label{mu_mar}
  \mu_u(A) := P(Y>u,\, \mathbf{X}\in A),
  \qquad A\in\mathcal{B}(\mathcal{X}).
\end{equation}
Since $\mu_u\ll P_{\mathbf{X}}$, fix a Borel version of its Radon--Nikodym derivative,
\begin{equation}\label{eq:tail_score}
  \pi_u(\mathbf{x})
  :=
  \frac{\mathrm{d}\mu_u}{\mathrm{d}P_{\mathbf{X}}}(\mathbf{x})
  = P(Y>u\mid\mathbf{X}=\mathbf{x}),
\end{equation}
where the final equality holds $P_{\mathbf{X}}$-almost everywhere. Hence
\[
  \mu_u(A)=\int_A\pi_u(\mathbf{x})\,P_{\mathbf{X}}(\mathrm{d}\mathbf{x}).
\]
The normalized tail measure
\begin{equation}\label{eq:tail_law}
  Q_u(A)
  :=
  \frac{\mu_u(A)}{p_u}
  = P(\mathbf{X}\in A\mid Y>u)
\end{equation}
is the conditional covariate law given an exceedance. Thus $\mu_u(A)$ is joint tail mass, whereas $Q_u(A)$ is the fraction of exceedances whose covariates lie in $A$.

\begin{definition}[Risk halo]\label{mrh}
For $\alpha\in(0,1)$, a \emph{risk halo} is any solution, when one exists, of
\begin{equation}\label{eq:risk_halo_optimization}
  \mathscr{H}_{u,\alpha}
  \in
  \operatorname*{arg\,min}_{A\in\mathcal{B}(\mathcal{X})}
  \left\{
    P_{\mathbf{X}}(A):Q_u(A)\geq\alpha
  \right\}.
\end{equation}
Thus it is a covariate region of minimum $P_{\mathbf{X}}$-measure among those containing at least an $\alpha$-fraction of the exceedances above $u$.
\end{definition}

The criterion in \eqref{eq:risk_halo_optimization} is a set-function optimization problem in the sense of \cite{decarvalho2024}. The following condition gives existence, uniqueness, and a canonical superlevel-set representation.

\begin{proposition}[Risk-halo characterization]
\label{prop:risk_halo_characterization}
Let $\mathbf{X}_u\sim Q_u$, define $Z_u=\pi_u(\mathbf{X}_u)$, and suppose that the distribution of $Z_u$ is atom-free. If $F_{Z_u}^{-1}(q)=\inf\{t:F_{Z_u}(t)\geq q\}$ and
\[
  \tau_{u,\alpha}:=F_{Z_u}^{-1}(1-\alpha),
\]
then
\begin{equation}\label{suprep}
  \mathscr{H}_{u,\alpha}
  =
  \{\mathbf{x}\in\mathcal{X}:\pi_u(\mathbf{x})>\tau_{u,\alpha}\}
\end{equation}
is the unique risk halo up to $P_{\mathbf{X}}$-null sets, and
\begin{equation}\label{eq:risk_halo_exact_mass}
  Q_u(\mathscr{H}_{u,\alpha})=\alpha,
  \qquad
  \mu_u(\mathscr{H}_{u,\alpha})=\alpha p_u.
\end{equation}
Moreover, if $0<\alpha_1<\alpha_2<1$, the canonical representatives are nested:
\[
  \mathscr{H}_{u,\alpha_1}
  \subseteq
  \mathscr{H}_{u,\alpha_2}.
\]
In particular,
\[
  Q_u(\mathcal{X}\setminus\mathscr{H}_{u,\alpha})
  =1-\alpha\longrightarrow0
  \qquad\text{as }\alpha\uparrow1.
\]
If, in addition, $\mathcal{X}$ is closed and $Q_u$ has full support on $\mathcal{X}$, then
\[
  \mathscr{H}_{u,\alpha}
  \xrightarrow[\mathrm{PK}]{}
  \mathcal{X}
  \qquad\text{as }\alpha\uparrow1.
\]
\end{proposition}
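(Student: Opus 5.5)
The plan is a Neyman--Pearson-type exchange argument, followed by monotonicity arguments for nesting and convergence.

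\textbf{Exact mass.} Since $Z_u$ is atom-free, $F_{Z_u}$ is continuous. Hence
\[
P(Z_u>\tau_{u,\alpha}) = 1-F_{Z_u}(\tau_{u,\alpha}) = 1-(1-\alpha)=\alpha .
\]
Because $Q_u(\{\mathbf{x}:\pi_u(\mathbf{x})>t\})=P(Z_u>t)$, this gives $Q_u(\mathscr{H}_{u,\alpha})=\alpha$, and so $\mu_u(\mathscr{H}_{u,\alpha})=\alpha p_u$. Write $H=\mathscr{H}_{u,\alpha}$ and $\tau=\tau_{u,\alpha}$. Since $\alpha<1$ we have $1-\alpha>0$; since $Q_u$ is concentrated on $\{\pi_u>0\}$, we get $\tau\ge 0$.

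\textbf{Optimality.} Take any Borel $A$ with $Q_u(A)\ge\alpha=Q_u(H)$. Then $\mu_u(A\setminus H)\ge \mu_u(H\setminus A)$. On $A\setminus H$ we have $\pi_u\le\tau$, and on $H\setminus A$ we have $\pi_u>\tau$. Therefore
\[
\tau P_{\mathbf{X}}(A\setminus H)\ \ge\ \mu_u(A\setminus H)\ \ge\ \mu_u(H\setminus A)\ \ge\ \tau P_{\mathbf{X}}(H\setminus A),
\]
and the last inequality is strict whenever $P_{\mathbf{X}}(H\setminus A)>0$ and $\tau>0$.

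\textbf{Case $\tau>0$.} Dividing by $\tau$ gives $P_{\mathbf{X}}(A)\ge P_{\mathbf{X}}(H)$, so $H$ is a minimizer. For uniqueness, take another minimizer $A$. Then all the inequalities above are equalities, which forces $\int_{H\setminus A}(\pi_u-\tau)\,\dif P_{\mathbf{X}}=0$, and hence $P_{\mathbf{X}}(H\setminus A)=0$. It also forces $\int_{A\setminus H}(\tau-\pi_u)\,\dif P_{\mathbf{X}}=0$, so $\pi_u=\tau$ $P_{\mathbf{X}}$-a.e.\ on $A\setminus H$. This is the main obstacle: atom-freeness is stated only for $Q_u$, so $\{\pi_u=\tau\}$ is $Q_u$-null, i.e.\ $\mu_u$-null. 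Since $\pi_u=\tau>0$ on this set, $\mu_u$-null implies $P_{\mathbf{X}}$-null, giving $P_{\mathbf{X}}(A\setminus H)=0$. Thus $A=H$ up to $P_{\mathbf{X}}$-null sets.

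\textbf{Case $\tau=0$.} The atom-free condition shows $\{\pi_u=0\}$ is irrelevant for $Q_u$. Since $Q_u(\pi_u=0)=0$, we get $Q_u(\pi_u>0)=1$, which gives $\alpha=1$ by continuity, a contradiction. So this case is vacuous.

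\textbf{Nesting.} The quantile function is nondecreasing, so $\alpha_1<\alpha_2$ gives $\tau_{u,\alpha_1}\ge\tau_{u,\alpha_2}$, and the superlevel sets are nested. The identity $Q_u(\mathcal{X}\setminus H)=1-\alpha$ is immediate from the exact-mass step.

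\textbf{Painlevé--Kuratowski convergence.} Write $\mathrm{PK}$ for Painlevé--Kuratowski convergence, i.e.\ equality of the inner and outer limits of the sets. The family is increasing in $\alpha$, so its PK limit is the closure of $\bigcup_\alpha \mathscr{H}_{u,\alpha}$. Since $\mathcal{X}$ is closed, this closure lies in $\mathcal{X}$, which gives the inclusion $\limsup\subseteq\mathcal{X}$.

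For the reverse inclusion, fix $\mathbf{x}\in\mathcal{X}$ and $\varepsilon>0$. Full support gives $Q_u(B(\mathbf{x},\varepsilon))>0$. Choose $\alpha>1-Q_u(B(\mathbf{x},\varepsilon))$. Then $Q_u(\mathcal{X}\setminus\mathscr{H}_{u,\alpha})<Q_u(B(\mathbf{x},\varepsilon))$, so the ball meets $\mathscr{H}_{u,\alpha}$. By nesting it also meets every larger halo. Hence $\mathbf{x}$ lies in the lower (inner) limit, and the inner and outer limits both equal $\mathcal{X}$. A sequential version with $\alpha_n\uparrow1$ follows from the same argument.
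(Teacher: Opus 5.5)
Your proof is correct and follows essentially the same route as the paper. Your exchange chain $\tau P_{\mathbf{X}}(A\setminus H)\ge\mu_u(A\setminus H)\ge\mu_u(H\setminus A)\ge\tau P_{\mathbf{X}}(H\setminus A)$ is the paper's inequality $\int\{\mathbf{1}_H-\mathbf{1}_A\}(\pi_u-\tau)\,\mathrm{d}P_{\mathbf{X}}\ge0$ written out term by term, and you close uniqueness the same way, via $0=Q_u(\pi_u=\tau)=(\tau/p_u)P_{\mathbf{X}}(\pi_u=\tau)$. The differences are cosmetic: you rule out $\tau=0$ by contradiction instead of reading $\tau>0$ directly off $F_{Z_u}(0)=0<1-\alpha$, and for Painlev\'e--Kuratowski convergence you use nesting (increasing sets converge to the closure of their union) rather than the paper's subsequence contradiction.
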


Proposition~\ref{prop:risk_halo_characterization} also shows that $\alpha$ indexes a coherent path of tail-mass coverage: as $\alpha$ increases, the cutoff $\tau_{u,\alpha}$ does not increase, so configurations with progressively smaller values of $\pi_u$ may enter the halo without any previously included configuration being removed. The final conclusion adds a geometric counterpart to $Q_u(\mathcal{X}\setminus\mathscr{H}_{u,\alpha})\to0$. Painlev\'e--Kuratowski (PK) convergence is a canonical notion of convergence for sequences of sets \citep{molchanov2005}: the lower limit comprises points whose every neighborhood intersects all sufficiently late sets, whereas the upper limit comprises points whose every neighborhood intersects infinitely many sets; convergence means that these two limits coincide. Thus, under the stated support and closedness conditions, no neighborhood of a point in $\mathcal{X}$ remains disjoint from the halo as $\alpha\uparrow1$, and no limiting halo point lies outside $\mathcal{X}$.

\begin{remark}[Ties at the cutoff]
\label{rem:risk_halo_ties}
If $Z_u$ has atoms, choose $\tau$ such that
\[
  Q_u(\pi_u>\tau)\leq\alpha\leq Q_u(\pi_u\geq\tau).
\]
When the boundary level set can be split, a risk halo has the form
\[
  \{\pi_u>\tau\}\cup C,
  \qquad
  C\in\mathcal{B}(\mathcal{X}),
  \qquad C\subseteq\{\pi_u=\tau\},
\]
where $Q_u(C)=\alpha-Q_u(\pi_u>\tau)$ supplies the remaining tail mass; such a halo need not be unique. If the boundary contains indivisible atoms, the deterministic problem can instead be knapsack-type and need only satisfy the inequality in \eqref{eq:risk_halo_optimization}; randomized inclusion on the boundary restores the threshold rule with exact tail mass.
\end{remark}

\noindent The following parametric example illustrates the concept.
\begin{example}[Pareto regression, I]\label{tir}
  Suppose $Y \mid X = x \sim \mathrm{Pareto}(\alpha(x))$ and $X \sim \mathrm{Unif}[0, 1]$, where $\alpha(x) = \beta_1 + \beta_2 x$, $\beta_1 > 0$, $\beta_2 > 0$, and $\mathcal{X} = [0, 1]$. It follows that $P(Y > u \mid X = x) = u^{-(\beta_1 + \beta_2 x)}$ for $u \in \mathcal{Y} = (1, \infty)$. The law of total probability implies that
  \begin{equation}\label{margp}
    P(Y > u) = \int_{0}^1 u^{-\alpha(x)} \, \dif x = \frac{u^{-\beta_1}(1 - u^{-\beta_2})}{\beta_2 \log u}, 
  \end{equation}
and thus the resulting risk halo is obtained by solving for $x$ such that $\mu_u([0,x])\geq \alpha\mu_u([0,1])$. Hence, for $u > 1$ and $\alpha \in (0, 1)$, 
  \begin{equation}\label{rhpareto}
    \mathscr{H}_{u,\alpha} = [0, x_{u, \alpha}), \qquad
    x_{u, \alpha} 
    = -\frac{1}{\beta_2 \log u}\,
    \log\!\left(1 - \alpha(1 - u^{-\beta_2})\right). 
  \end{equation}
Figure~\ref{fig:pareto_halos} illustrates this with simulated data for $\alpha = 0.90$ and $u$ set to the marginal $0.95$ quantile of $Y$. For $\beta_1=1$ and $\beta_2=10$, this gives $u\approx2.35$ and $x_{u,\alpha}\approx0.27$. In this simulated sample, $0.93$ of the exceedances fall within the risk halo, close to the target fraction $0.90$.
\end{example}
\begin{figure}[t]
  \centering
  \includegraphics[width=0.45\textwidth]{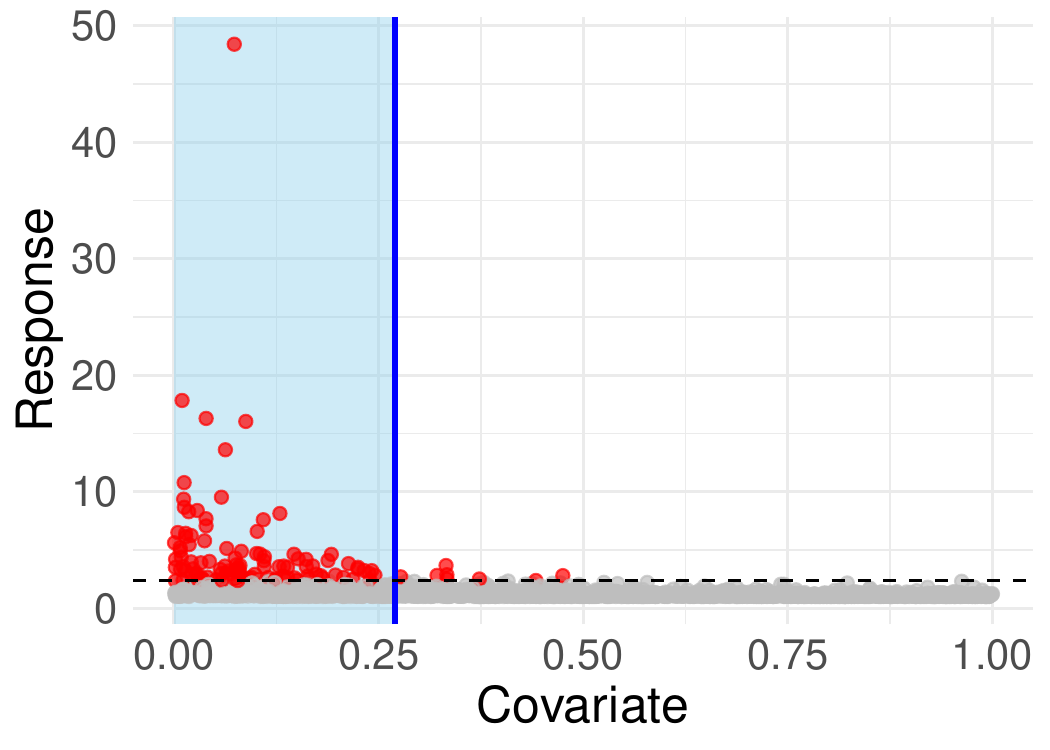} %
  \includegraphics[width=0.45\textwidth]{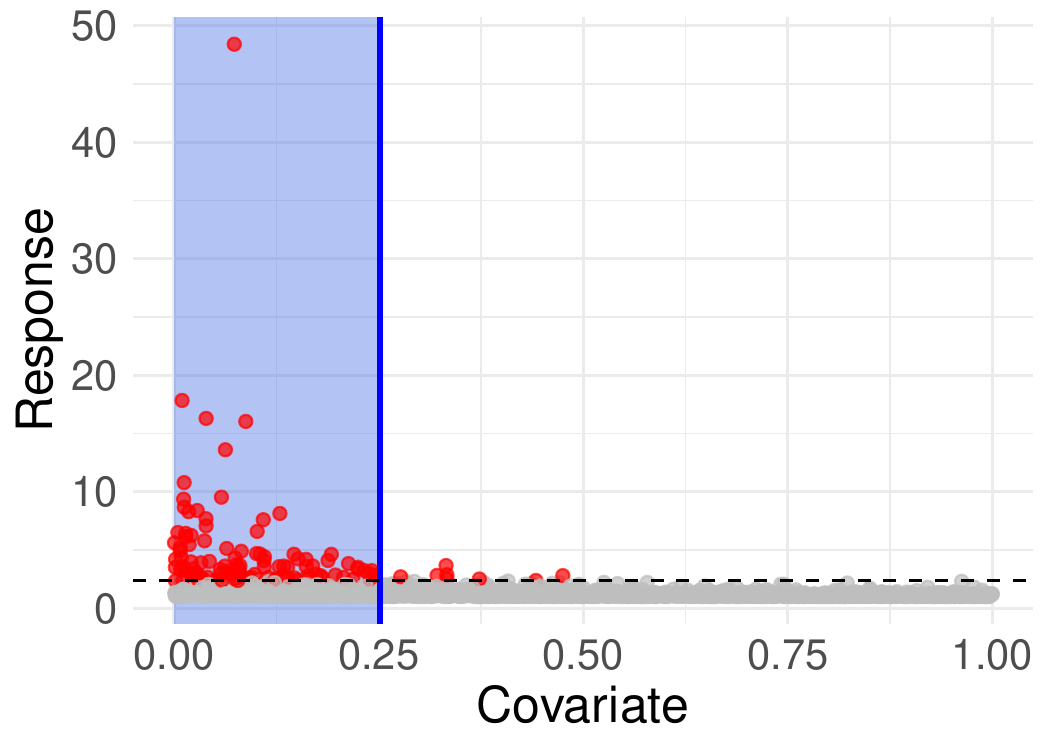}
  \caption{Left: True risk halo defining the covariate region capturing an $\alpha = 0.90$ fraction of the exceedances above the threshold $u$. Right: True favorable halo showing the region of covariate space in which the conditional exceedance probability exceeds its marginal level. Both panels use the same simulated data from Example~\ref{tir}, with $\beta_{1}=1$, $\beta_{2}=10$, and $u\approx2.35$, the marginal $0.95$ quantile of $Y$. The endpoints of the true halos in~\eqref{rhpareto} and~\eqref{favhalos}, rounded to two decimal places, are $0.27$ and $0.25$, respectively.}
  \label{fig:pareto_halos}
\end{figure}
Risk halos for bivariate responses can be defined analogously by modifying the tail measure in~\eqref{mu_mar}. For $(u, v) \in \mathcal{Y} \subseteq \mathbb{R}^2$, set
\begin{equation}\label{mu_mar2}
\begin{aligned}
  \mu_{u,v}(A)
  &:= P(Y_{1}>u,\, Y_{2}>v,\, \mathbf{X}\in A)\\
  &= \int_A P(Y_{1}>u,\, Y_{2}>v \mid \mathbf{X}=\mathbf{x})\,
     P_{\mathbf{X}}(\mathrm{d}\mathbf{x}),
  \qquad A\in\mathcal{B}(\mathcal{X}).
\end{aligned}
\end{equation}
Provided $p_{u,v}:=\mu_{u,v}(\mathcal{X})>0$, write
\[
  \pi_{u,v}:=\frac{\mathrm{d}\mu_{u,v}}{\mathrm{d}P_{\mathbf{X}}},
  \qquad
  Q_{u,v}:=\frac{\mu_{u,v}}{p_{u,v}}.
\]
Definition~\ref{mrh} then applies with $(u,v)$ in place of $u$, and Proposition~\ref{prop:risk_halo_characterization} applies under the corresponding atom-free condition on $Z_{u,v}=\pi_{u,v}(\mathbf{X}_{u,v})$ for $\mathbf{X}_{u,v}\sim Q_{u,v}$. In particular, under that condition, $\mathscr{H}_{u,v,\alpha}$ is a covariate region of minimum $P_{\mathbf{X}}$-measure among those containing at least an $\alpha$-fraction of the joint exceedances. The construction extends directly to higher-dimensional responses, with $\mathcal{Y} \subseteq \mathbb{R}^d$.
\subsection{Favorable halo}
The risk halo in Definition~\ref{mrh} is a concentration-based notion: it identifies covariate regions that contain at least an $\alpha$-fraction of all exceedances above $u$. A complementary notion identifies where the conditional exceedance probability is higher than its marginal level. For every measurable $C\subseteq\mathcal{X}$ with $P_{\mathbf{X}}(C)>0$, the version $\pi_u$ in \eqref{eq:tail_score} gives
\begin{equation}\label{eq:favorable_average}
  P(Y>u\mid\mathbf{X}\in C)-p_u
  =
  \frac{1}{P_{\mathbf{X}}(C)}
  \int_C\{\pi_u(\mathbf{x})-p_u\}\,
  P_{\mathbf{X}}(\mathrm{d}\mathbf{x}).
\end{equation}
Thus any positive-mass region contained, up to null sets, in $\{\pi_u>p_u\}$ is probability-increasing for the exceedance event. This is the covariate analogue of a favorable event \citep{chung1942,kramer2005}, stated without requiring singleton events $\{\mathbf{X}=\mathbf{x}\}$ to have positive probability.

\begin{definition}[Favorable halo]\label{fh}
  The favorable halo at threshold \(u\) is the set
  \begin{equation}\label{favhalos}
    \mathscr{F}_{u}
    = \{\mathbf{x}\in\mathcal{X}:\pi_u(\mathbf{x})>p_u\},
  \end{equation}
  understood up to $P_{\mathbf{X}}$-null sets. It is the covariate region in which a version of the conditional exceedance probability exceeds its marginal level.
\end{definition}

If $\pi_u$ is nonconstant $P_{\mathbf{X}}$-almost surely, then $P_{\mathbf{X}}(\mathscr{F}_u)>0$. In the no-signal case $\pi_u=p_u$ almost surely, the favorable halo is null and no favorable-halo-induced law is defined.

The same definition applies to a multivariate tail event. For example, with the bivariate notation above,
\[
  \mathscr{F}_{u,v}
  =\{\mathbf{x}\in\mathcal{X}:\pi_{u,v}(\mathbf{x})>p_{u,v}\}.
\]
More generally, for a componentwise threshold $\boldsymbol{u}$, one replaces $\pi_u$ and $p_u$ by $\pi_{\boldsymbol{u}}(\mathbf{x})=P(\mathbf{Y}>\boldsymbol{u}\mid\mathbf{X}=\mathbf{x})$ and $p_{\boldsymbol{u}}=P(\mathbf{Y}>\boldsymbol{u})$.

As can be seen by comparing \eqref{favhalos} with \eqref{suprep}, there are clear links between risk halos and favorable halos. However, while risk halos identify covariate regions that account for an $\alpha$-fraction of the exceedances above $u$, favorable halos are defined as the covariate regions that are comparatively more favorable for tail events.

\begin{example}[Pareto regression, II]\label{tir2}
  We consider again the Pareto regression framework from Example~\ref{tir}. It follows from \eqref{margp} that, for $u > 1$,
  \begin{equation*}
    \mathscr{F}_{u} = [0, x^*_{u}), \qquad 
  x_u^{*} 
  = -\frac{1}{\beta_2 \log u}
    \log\left(\frac{1 - u^{-\beta_2}}{\beta_2 \log u}\right).
  \end{equation*}
Figure~\ref{fig:pareto_halos} illustrates this with simulated data where 
$u$ is once again set to the marginal $0.95$ quantile of $Y$. For $\beta_1=1$ and $\beta_2=10$, this gives $u\approx2.35$ and $x_u^*\approx0.25$. The resulting region is in line with that provided by the risk halo.
\end{example}

\subsection{Halo-induced covariate laws}\label{sec:halo_distribution}
The preceding definitions identify subsets of the covariate space, but in practice such regions will be difficult, if not impossible, to represent directly when the covariate space has moderate or high dimensionality. Motivated by this, we associate each halo with the distribution of covariates inside the halo. This induced law is not merely a display device: it is the population object that converts a tail-relevant set into interpretable distributional summaries of the covariates that make up the set. If $B\in\mathcal{B}(\mathcal{X})$ is any halo with $P_{\mathbf{X}}(B)>0$, define the halo-induced covariate law
\begin{equation}\label{eq:halo_law}
  P_{\mathbf{X}}^{B}(A)
  =
  P(\mathbf{X}\in A\mid \mathbf{X}\in B)
  =
  \frac{P_{\mathbf{X}}(A\cap B)}{P_{\mathbf{X}}(B)},
  \qquad A\in\mathcal{B}(\mathcal{X}).
\end{equation}
For risk halos, $B=\mathscr{H}_{u,\alpha}$ or $B=\mathscr{H}_{u,v,\alpha}$; for favorable halos, $B=\mathscr{F}_u$ or $B=\mathscr{F}_{u,v}$. The law $P_{\mathbf{X}}^{B}$ answers a different question from the halo boundary. The boundary says which covariate values belong to the tail-relevant region, whereas $P_{\mathbf{X}}^{B}$ describes how covariate values are distributed within that region.

For the $j$th coordinate, let $P_{X_j}^{B}$ be the pushforward of $P_{\mathbf{X}}^{B}$ under the projection $\mathbf{x}\mapsto x_j$. Because $P_{X_j}^{B}\ll P_{X_j}$, the coordinate-specific enrichment ratio is
\begin{equation}\label{eq:true_halo_density}
  r_j^{B}(t)
  :=\frac{\mathrm{d}P_{X_j}^{B}}{\mathrm{d}P_{X_j}}(t)
  =
  \frac{P(\mathbf{X}\in B\mid X_j=t)}{P_{\mathbf{X}}(B)},
  \qquad P_{X_j}\text{-almost everywhere}.
\end{equation}
This definition applies to continuous, discrete, mixed, and cyclic covariates. If $P_{X_j}$ has density or mass function $f_j$ with respect to a reference measure $m_j$, then the halo-induced marginal has density or mass function
\begin{equation}\label{eq:halo_density_reference}
  f_j^{B}(t)=r_j^{B}(t)f_j(t)
  =\frac{P(\mathbf{X}\in B\mid X_j=t)f_j(t)}{P_{\mathbf{X}}(B)}.
\end{equation}
For example, $m_j$ may be Lebesgue measure for a continuous covariate, counting measure for a discrete covariate, or a suitable mixture of the two. If the full covariate vector has a density $f_{\mathbf{X}}$ with respect to a product reference measure, \eqref{eq:halo_density_reference} reduces to
\[
  f_j^{B}(t)
  =
  \frac{1}{P_{\mathbf{X}}(B)}
  \int_{\mathbb{R}^{p-1}}
  \mathbf{1}\{(t,\mathbf{x}_{-j})\in B\}
  f_{\mathbf{X}}(t,\mathbf{x}_{-j})\,m_{-j}(\mathrm{d}\mathbf{x}_{-j}).
\]
Here $f_{\mathbf{X}}$ is understood to be zero off $\mathcal{X}$.
Values with $r_j^{B}(t)>1$ are over-represented inside the halo relative to the marginal law of $X_j$, while values with $r_j^{B}(t)<1$ are under-represented. When densities exist, this is equivalent to comparing $f_j^{B}$ with $f_j$. In particular, when $X_j$ is marginally uniform, an approximately uniform halo density indicates that the halo is not concentrated in any particular part of that covariate's range; for a non-uniform covariate, the relevant comparison is not flatness, but similarity to the marginal density. These induced marginals should be read as covariate profiles of a halo, rather than as regression effects or marginal tail probabilities.

\begin{example}[Pareto regression, III]\label{tir3}
Return to the Pareto regression model in Examples~\ref{tir} and~\ref{tir2}, where $X\sim\mathrm{Unif}[0,1]$. In this case the marginal covariate density is $f_X(t)=1$ on $[0,1]$. Since the risk halo is $\mathscr{H}_{u,\alpha}=[0,x_{u,\alpha})$, its halo density is
\begin{equation*}
  f^{\mathscr{H}_{u,\alpha}}(t)
  =
  \frac{1}{x_{u,\alpha}}\,
  \mathbf{1}\{0\le t < x_{u,\alpha}\}.
\end{equation*}
Similarly, since the favorable halo is $\mathscr{F}_u=[0,x_u^*)$,
\begin{equation*}
  f^{\mathscr{F}_u}(t)
  =
  \frac{1}{x_u^*}\,
  \mathbf{1}\{0\le t < x_u^*\}.
\end{equation*}
Thus both halo densities are uniform after conditioning on the relevant halo, but they are not equal to the marginal covariate density on $[0,1]$. Their heights exceed one because the probability mass is concentrated on a shorter interval. Figure~\ref{fig:pareto_halo_densities} shows the corresponding true halo densities for $\beta_1=1$, $\beta_2=10$, $\alpha=0.90$, and $u\approx2.35$, the marginal $0.95$ quantile of $Y$. The risk- and favorable-halo densities have heights $3.71$ and $3.98$, respectively.
\end{example}

\begin{figure}[t]
  \centering
  \includegraphics[width=0.45\textwidth]{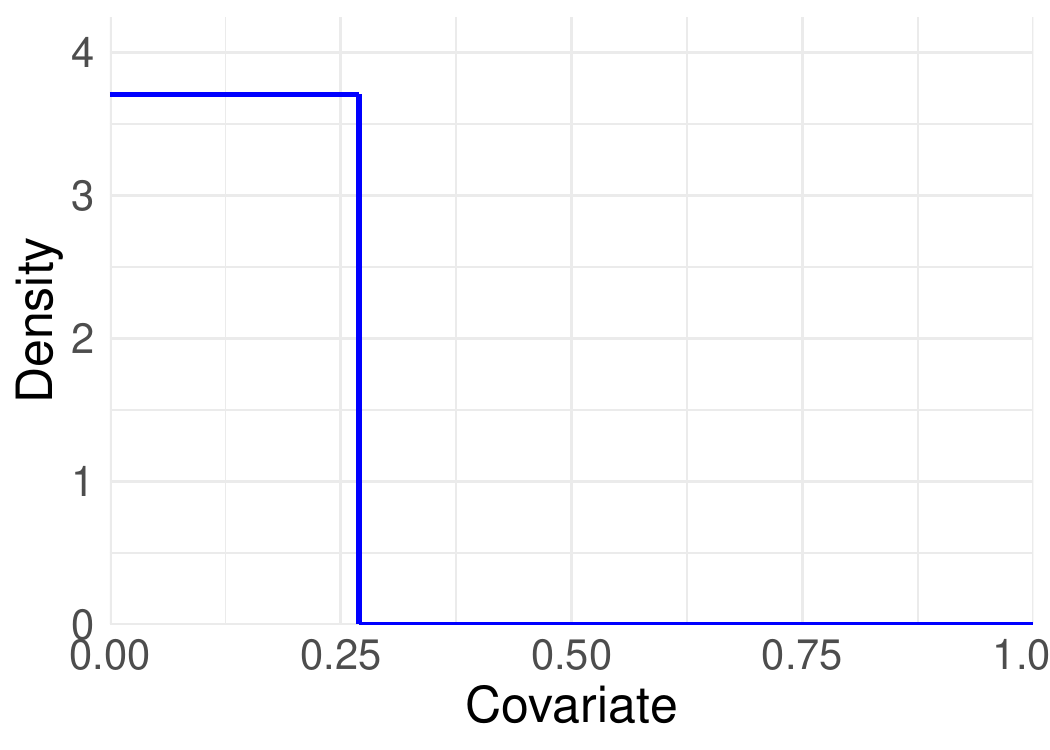}
  \includegraphics[width=0.45\textwidth]{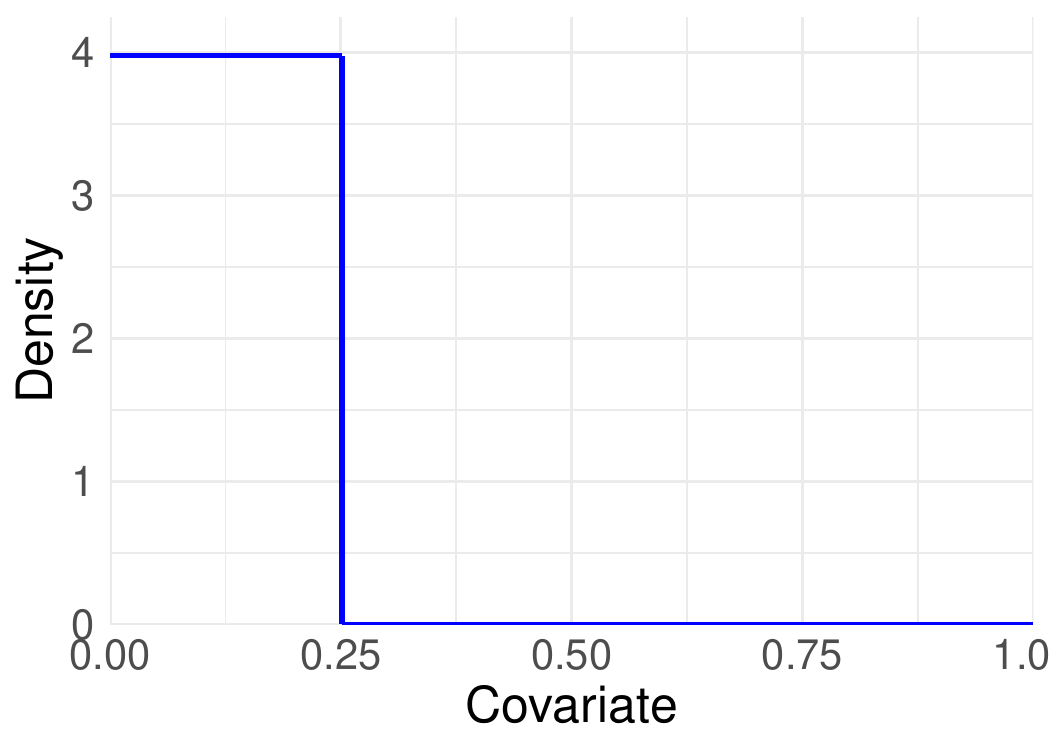}
  \caption{True halo densities for the Pareto regression example with $u\approx2.35$. Left: density induced by the risk halo, whose endpoint and height are $0.27$ and $3.71$. Right: density induced by the favorable halo, whose endpoint and height are $0.25$ and $3.98$. All reported values are rounded to two decimal places.}
  \label{fig:pareto_halo_densities}
\end{figure}

\begin{example}[EGPD regression]\label{egpd_halo_density_example}
The same calculation applies to the EGPD regression model used later for inference. Let $X$ have density $f_X$ on $\mathcal X\subseteq\mathbb R$, and suppose that, conditionally on $X=x$,
\begin{equation*}
  F(y\mid x)=G_{\kappa(x)}\{H(y;x)\},\qquad
  H(y;x)=1-\left(1+\frac{\xi(x)y}{\nu(x)}\right)^{-1/\xi(x)}.
\end{equation*}
For a fixed threshold $u$, define the conditional exceedance probability
\begin{equation*}
  \pi_u(x)=P(Y>u\mid X=x)=1-G_{\kappa(x)}\{H(u;x)\}.
\end{equation*}
Under the atom-free condition of Proposition~\ref{prop:risk_halo_characterization}, the risk halo is the superlevel set $B_R=\{x:\pi_u(x)>\tau_{u,\alpha}\}$, where $\tau_{u,\alpha}$ is chosen so that $\int_{B_R}\pi_u(x)f_X(x)\,\dif x=\alpha\int_{\mathcal X}\pi_u(x)f_X(x)\,\dif x$. The favorable halo is $B_F=\{x:\pi_u(x)>\bar\pi_u\}$, where $\bar\pi_u=\int_{\mathcal X}\pi_u(x)f_X(x)\,\dif x$. For either $B\in\{B_R,B_F\}$, the halo density is therefore
\begin{equation*}
  f^B(t)=
  \frac{f_X(t)\mathbf 1\{t\in B\}}
       {\int_B f_X(s)\,\dif s}.
\end{equation*}
Thus, once the EGPD exceedance probability has identified the halo, the induced halo density is simply the covariate density renormalized over that halo. In the special case $X\sim\mathrm{Unif}[0,1]$ and $\pi_u(x)$ is monotone increasing, both halos are upper intervals, say $B=[x_B,1]$, and $f^B(t)=(1-x_B)^{-1}\mathbf 1\{x_B\le t\le1\}$.
\end{example}

More generally, every halo and halo-induced law in this section is a functional of two ingredients: the conditional probability of the tail event and the covariate law $P_{\mathbf X}$. Section~\ref{model2} models the former and leaves the latter unrestricted, replacing it in the sample by the empirical covariate law.

\section{Learning halo-induced covariate laws from data}\label{model2}

The preceding section defined the population tail regions and associated them with the covariate laws they induce. We now connect those population objects to their fitted counterparts before specifying the conditional tail model and its posterior approximation.

\paragraph*{From population targets to fitted objects}
The connection with Section~\ref{model} is as follows. Let $e\in\mathcal E:=\{1,2,12\}$ index the two marginal tail events and the bivariate joint tail event, respectively. For fixed thresholds $u_1$ and $u_2$, the conditional marginal distributions and copula introduced below determine the tail scores
\begin{equation}
\begin{aligned}
  \pi^{(1)}(\x)
  &=P(Y_1>u_1\mid\X=\x)
    =1-F_1(u_1\mid\x),\\
  \pi^{(2)}(\x)
  &=P(Y_2>u_2\mid\X=\x)
    =1-F_2(u_2\mid\x),\\
  \pi^{(12)}(\x)
  &=P(Y_1>u_1,Y_2>u_2\mid\X=\x)\\
  &=1-F_1(u_1\mid\x)-F_2(u_2\mid\x)
    +C_{\x}\!\left\{F_1(u_1\mid\x),F_2(u_2\mid\x)\right\}.
\end{aligned}
\label{eq:model_tail_scores}
\end{equation}
Thus the EGPD marginals and the conditional copula enter the halo construction through the marginal tail scores in \eqref{eq:tail_score} and the joint tail measure in \eqref{mu_mar2}.

Given observed covariates $\x_1,\dots,\x_n$, we replace $P_{\mathbf X}$ by its empirical law. For any fitted tail score $\widehat\pi^{(e)}$, write $\widehat\pi_i^{(e)}=\widehat\pi^{(e)}(\x_i)$. Provided $\sum_{i=1}^n\widehat\pi_i^{(e)}>0$, the direct empirical counterparts of the covariate law, tail measure, marginal tail probability, and conditional covariate law in \eqref{mu_mar}--\eqref{eq:tail_law} are
\begin{equation}
\begin{aligned}
  \widehat P_{\mathbf X,n}
  &=\frac{1}{n}\sum_{i=1}^n\delta_{\x_i},\\
  \widehat\mu^{(e)}(A)
  &=\int_A\widehat\pi^{(e)}(\x)\,
    \widehat P_{\mathbf X,n}(\mathrm{d}\x)
    =\frac{1}{n}\sum_{i=1}^n
      \widehat\pi_i^{(e)}\mathbf 1\{\x_i\in A\},\\
  \widehat p^{(e)}
  &=\widehat\mu^{(e)}(\mathcal X)
    =\frac{1}{n}\sum_{i=1}^n\widehat\pi_i^{(e)},\\
  \widehat Q^{(e)}(A)
  &=\frac{\widehat\mu^{(e)}(A)}{\widehat p^{(e)}}
    =\frac{\sum_{i=1}^n\widehat\pi_i^{(e)}\mathbf 1\{\x_i\in A\}}
           {\sum_{i=1}^n\widehat\pi_i^{(e)}}.
\end{aligned}
\label{eq:empirical_tail_measures}
\end{equation}
On the indexed empirical sample, the quantities in \eqref{eq:empirical_tail_measures} lead to the risk-halo optimization and favorable-halo thresholding rules developed in Section~\ref{sec:halo_construction}. If $\widehat I^{(e)}\subseteq\{1,\dots,n\}$ is the index set selected by either rule, the corresponding selected-observation law is
\begin{equation}
  \widehat P_{\mathbf X,n}^{\,\widehat I^{(e)}}(A)
  =
  \frac{1}{|\widehat I^{(e)}|}
  \sum_{i\in\widehat I^{(e)}}\mathbf 1\{\x_i\in A\},
  \qquad A\in\mathcal B(\mathcal X),
\label{eq:empirical_halo_law}
\end{equation}
whenever $\widehat I^{(e)}$ is nonempty. This is a finite-sample representation on indexed observations. If the observed covariate vectors are distinct, it coincides with conditioning $\widehat P_{\mathbf X,n}$ on the selected covariate set and is therefore the direct empirical counterpart of \eqref{eq:halo_law}. With repeated covariate vectors at a cutoff, a deterministic covariate set must retain all copies of the repeated value; selecting only some copies is instead an index-level boundary split, analogous to the randomized treatment in Remark~\ref{rem:risk_halo_ties}. The coordinate-wise pushforwards of \eqref{eq:empirical_halo_law} are the empirical laws subsequently smoothed when a density representation is useful.

Finally, write $\bs\vartheta$ for the collection of marginal and copula model parameters, defined in detail below. For each draw $\bs\vartheta^{(s)}$ from the variational posterior, we evaluate the scores in \eqref{eq:model_tail_scores}, form the empirical quantities in \eqref{eq:empirical_tail_measures}, select the corresponding halo index set, and construct its induced law using \eqref{eq:empirical_halo_law}. Repeating these steps draw by draw propagates uncertainty from the fitted conditional tail model to the halo and its induced covariate law. The next subsections specify the conditional EGPD marginals, the covariate-dependent copula, the structured additive spike-and-slab prior, and the variational approximation that supply the draws $\bs\vartheta^{(s)}$. Section~\ref{sec:halo_construction} then gives the selection and smoothing details.

\subsection{Marginal model}

Let $Y$ denote a positive response observed together with covariates $\X=\x$. We model the conditional marginal distribution of $Y$ given $\X=\x$ using the extended generalized Pareto distribution (EGPD); see \citet{naveau2016} and \citet{decarvalho2021a}. The conditional cumulative distribution function is
\begin{align}
F(y \mid \x) &= G_{\kappa(\x)}\bigl(H(y;\x)\bigr),
\label{eq:egpd_cond_cdf}
\end{align}
where $G_{\kappa(\x)}: [0,1] \to [0,1]$ is a carrier function parametrized by $\kappa(\x)$, and
\begin{align*}
H(y;\x) &= 1 - \left(1 + \frac{\xi(\x)y}{\nu(\x)}\right)^{-1/\xi(\x)},
\end{align*}
defined on $\{y \in (0,\infty): 1+\xi(\x)y/\nu(\x)>0\}$. Here, $\nu(\x)>0$ is a scale parameter and $\xi(\x)\in\mathbb{R}$ is the shape parameter; the case $\xi(\x)=0$ is understood as the limit $\xi(\x)\to0$. The shape parameter $\xi(\x)$ is the extreme value index and controls the rate of decay of the tail.

Differentiating \eqref{eq:egpd_cond_cdf}, the corresponding conditional density is
\begin{align*}
f(y \mid \x) &= g_{\kappa(\x)}\bigl(H(y;\x)\bigr) h(y;\x),
\end{align*}
where $g_{\kappa(\x)}$ is the derivative of the carrier function $G_{\kappa(\x)}$ and
\begin{align*}
h(y;\x) &= \frac{1}{\nu(\x)}\left(1 + \frac{\xi(\x)y}{\nu(\x)}\right)^{-1/\xi(\x) - 1}.
\end{align*}

To incorporate covariate information, we let each distributional parameter $p \in \{\nu,\xi,\kappa\}$ depend on $\x$ through a structured additive predictor. More precisely, we write
\begin{align}
p(\x;\bs\beta_p) &= h_p\{\eta_p(\x)\}, \label{eq:marginal_response}
\end{align}
where $h_p$ is a suitable response function and
\begin{align}
\eta_p(\x) &= \sum_{j=1}^{J_p} f_{p,j}(x_j,\bs\beta_{p,j}). \label{eq:linear_term}
\end{align}
Here, $\bs\beta_p$ collects all regression coefficients entering the predictor for parameter $p$, whereas $\bs\beta_{p,j}$ denotes the subvector associated with the $j$th covariate effect. Thus the notation in \eqref{eq:marginal_response} refers to the full parameter-specific coefficient vector, while \eqref{eq:linear_term} displays its effect-wise decomposition. The functions $f_{p,j}$ may represent effects of different types, such as linear or smooth effects. In this way, the covariates affect the conditional marginal distribution by acting directly on its distributional parameters.

\subsection{Multivariate dependence model}

For a $D$-variate response $\mathbf{Y}=(Y_1,\dots,Y_D)$ observed with covariates $\X=\x$, each margin is assigned its own conditional EGPD model. For $d \in \{1,\dots,D\}$,
\begin{align}
F_d(y \mid \x) &= G_{\kappa_d(\x)}\bigl(H_d(y;\x)\bigr), \label{eq:egpd_cond_cdf_d}
\end{align}
with distributional parameters linked to $\x$ through structured additive predictors as in \eqref{eq:marginal_response}--\eqref{eq:linear_term}. Collecting all marginal and copula parameters, we define the covariate-dependent distributional parameter vector
\begin{align}
\boldsymbol{\varphi}(\x) = \bigl(\nu_1(\x),\xi_1(\x),\kappa_1(\x),\dots,\nu_D(\x),\xi_D(\x),\kappa_D(\x),\boldsymbol{\lambda}(\x)\bigr), \label{eq:full_parameter_vector}
\end{align}
where $\boldsymbol{\lambda}(\x)$ denotes the copula parameter vector. Thus, the marginal modeling step yields $D$ conditional distribution functions $F_1(\cdot \mid \x),\dots,F_D(\cdot \mid \x)$, one for each component of $\mathbf{Y}$, and these form the marginal part of $\boldsymbol{\varphi}(\x)$.

To combine these conditional marginals into a valid joint model, we adopt a copula-based construction. By Sklar’s theorem, any joint distribution with continuous marginals can be decomposed into its marginals and a copula. In the conditional setting, Patton’s extension of Sklar’s theorem shows that, provided the same conditioning information is used in both layers, the conditional joint distribution can be written in terms of the conditional marginals and a conditional copula; see \cite{patton2006modelling}. Accordingly, we define
\begin{align}
F(y_1,\dots,y_D \mid \x) &= C_{\x}\bigl(F_1(y_1 \mid \x),\dots,F_D(y_D \mid \x)\bigr), \label{eq:conditional_copula}
\end{align}
where $C_{\x}(\cdot)$ is a conditional copula. Hence, the conditional EGPD marginals from \eqref{eq:egpd_cond_cdf_d} enter the joint model only through their probability integral transforms, while the copula captures the remaining conditional dependence. In this way, \eqref{eq:conditional_copula} complements the marginal EGPD specification by turning the covariate-dependent parameter vector $\boldsymbol{\varphi}(\x)$ in \eqref{eq:full_parameter_vector} into a coherent conditional joint model.

We use parametric copulas and allow their parameters to vary with covariates through the same structured additive specification adopted for the marginal parameters. This lets the dependence structure change with external conditions while preserving interpretability, since many copula families admit parameters directly linked to Kendall’s $\tau$, tail dependence coefficients, or related measures of association.

For instance, consider the BB1 copula, parametrized by $\theta > 0$ and $\delta > 1$. We let both parameters vary with $\x$ according to
\begin{align*}
\theta(\x) = \exp\{\eta_{\theta}(\x)\}, \quad \delta(\x) = \exp\{\eta_{\delta}(\x)\} + 1,
\end{align*}
where $\eta_{\theta}(\x)$ and $\eta_{\delta}(\x)$ are defined as in \eqref{eq:linear_term} with $p \in \{\theta,\delta\}$. Hence, the copula parameter vector becomes $\boldsymbol{\lambda}(\x)=(\theta(\x),\delta(\x))$, so that both the marginal distributions and the dependence structure are functions of the same covariates: the EGPD layer governs the conditional tail behavior of each component, while the copula layer governs how these conditional extremes co-occur.

\subsection{Spike-and-Slab Model in Structured Additive Distributional Regression under Stochastic Variational Inference}\label{sec:spike-slab-model}

Our effect-selection strategy builds on the Bayesian shrinkage perspective of \citet{decarvalho2021a}, who developed an extreme-value Bayesian Lasso for conditional tail regression. Here, we extend this idea to nonlinear structured additive effects, multiple distributional parameters, and multivariate dependence through covariate-dependent copulas.

To enable automatic and interpretable selection of nonlinear covariate effects in the additive predictors of our marginal and dependence models, we employ the Normal--Beta Prime Spike-and-Slab (NBPSS) prior of \citet{klein_2021}, adapted for inference via stochastic variational inference (SVI). This prior allows the model to distinguish between relevant and negligible smooth effects by adaptively shrinking some functions toward zero.

Let $d$ denote the model component, that is, one of the $D$ marginal components or the copula component, let $p$ denote a distributional parameter within component $d$, let $j=1,\dots,J_{d,p}$ index the effects entering the predictor of parameter $p$, and let $i=1,\dots,n$ index the observations. For observation $i$, the $j$th smooth effect entering parameter $p$ of component $d$ is written as $f_{i,d,p,j}$ and decomposed into a penalized and an unpenalized component:
\begin{align*}
f_{i,d,p,j}
=
f^{\text{unpenalized}}_{i,d,p,j}
+
f^{\text{penalized}}_{i,d,p,j},
\end{align*}
and the penalized coefficients $\bs\beta^{\text{penalized}}_{d,p,j}$ follow a Gaussian prior conditional on a local variance parameter $\tau^2_{d,p,j}$,
\begin{align*}
p(\bs\beta^{\text{penalized}}_{d,p,j}\mid\tau^2_{d,p,j})
\propto
\exp\left\{
-\frac{1}{2\tau^2_{d,p,j}}
\left(\bs\beta^{\text{penalized}}_{d,p,j}\right)^\top
\mathbf{K}^{\text{penalized}}_{d,p,j}
\bs\beta^{\text{penalized}}_{d,p,j}
\right\}.
\end{align*}
The hierarchical structure of the NBPSS prior is preserved, using a shape--rate parameterization for the Gamma distribution,
\begin{align*}
\tau^2_{d,p,j}\mid\rho_{d,p,j},\psi^2_{d,p,j}
&\sim
\mathrm{Gamma}\!\left(
\frac{1}{2},
\frac{1}{2r_{d,p,j}(\rho_{d,p,j})\psi^2_{d,p,j}}
\right),\\
\rho_{d,p,j}\mid\omega_{d,p,j}
&\sim
\mathrm{Bernoulli}(\omega_{d,p,j}),\\
\psi^2_{d,p,j}
&\sim
\mathrm{Inverse\text{-}Gamma}(a_{d,p,j},b_{d,p,j}),\\
\omega_{d,p,j}
&\sim
\mathrm{Beta}(a_{0,d,p,j},b_{0,d,p,j}),
\end{align*}
with
\begin{align}
r_{d,p,j}(\rho_{d,p,j})=
\begin{cases}
r_{d,p,j}, & \text{if } \rho_{d,p,j}=0,\\
1, & \text{if } \rho_{d,p,j}=1,
\end{cases}
\label{eq:r_pj}
\end{align}
where $r_{d,p,j}>0$ is a small fixed value that enforces shrinkage under exclusion. The scale parameter $\psi^2_{d,p,j}$ determines the prior expectation of $\tau^2_{d,p,j}$, which is $\psi^2_{d,p,j}$ when $\rho_{d,p,j}=1$, and $r_{d,p,j}\psi^2_{d,p,j}$ when $\rho_{d,p,j}=0$. The indicator $\rho_{d,p,j}$ therefore determines whether the penalized nonlinear effect is assigned to the spike or slab component, inducing strong shrinkage toward zero under exclusion. The parameter $\omega_{d,p,j}$ denotes the prior inclusion probability, and the remaining terms $a_{d,p,j}, b_{d,p,j}, a_{0,d,p,j}, b_{0,d,p,j}$, and $r_{d,p,j}$ are hyperparameters of the spike-and-slab prior.

\paragraph*{Relaxed Bernoulli Approximation}
In \citet{klein_2021}, the indicator $\rho_{d,p,j}$ is assigned a Bernoulli prior. Since this prior is not differentiable, we replace it by a Relaxed Bernoulli (or Concrete) distribution to enable estimation by stochastic variational inference; see Section~\ref{section:SVIModel}. Specifically, we use the continuous and reparameterizable approximation
\begin{align*}
\tilde{\rho}_{d,p,j}=\operatorname{sigmoid}\!\left(\frac{\log\omega_{d,p,j}-\log(1-\omega_{d,p,j})+\log u-\log(1-u)}{T_{d,p,j}}\right), \qquad u\sim\mathrm{Unif}(0,1),
\end{align*}
where $T_{d,p,j}>0$ is an effect-specific temperature parameter controlling the sharpness of the relaxation. Rather than fixing $T_{d,p,j}$ a priori, we treat it as an additional hyperparameter and estimate it jointly with the remaining spike-and-slab quantities. As $T_{d,p,j}\to 0$, the Relaxed Bernoulli converges to the original Bernoulli distribution. This construction allows gradients to propagate through $\tilde{\rho}_{d,p,j}$ and therefore enables fully differentiable optimization of the variational objective. Finally, the expected inclusion probability $\mathbb{E}_q[\tilde{\rho}_{d,p,j}]$ is used to assess whether an effect is active in the model: values close to one indicate relevance, whereas values close to zero imply strong shrinkage of the corresponding spline coefficients.

\subsection{Variational Approximation under Stochastic Variational Inference}\label{section:SVIModel}

Posterior inference is performed via stochastic variational inference. Let $\mathcal{D}=\{1,\dots,D,c\}$ denote the set of model components, where $c$ stands for the copula component. For each $d\in\mathcal{D}$, let $\mathcal{P}_d$ denote the set of distributional parameters associated with component $d$. For each $d\in\mathcal{D}$, $p\in\mathcal{P}_d$, and $j=1,\dots,J_{d,p}$, define
\begin{align*}
\bs\zeta_{d,p,j}=\left(\logit\,\omega_{d,p,j},\,\logit\,\rho_{d,p,j},\,\log\psi^2_{d,p,j},\,\log\tau^2_{d,p,j},\,\log T_{d,p,j}\right)^\T.
\end{align*}
We then collect all unknown quantities into
\begin{align*}
\bs\vartheta=\left(\{\bs\beta_d\}_{d\in\mathcal{D}},\,\{\bs\zeta_{d,p,j}\}_{d\in\mathcal{D},\,p\in\mathcal{P}_d,\,j=1,\dots,J_{d,p}}\right),
\end{align*}
where $\bs\beta_d$ collects all regression coefficients associated with component $d$. We approximate the posterior $p(\bs\vartheta\mid\mathbf{y})$ by a tractable variational distribution $q_{\bs\phi}(\bs\vartheta)$ within a prescribed family $\mathcal{Q}$. The variational parameters $\bs\phi$ are obtained by maximizing the evidence lower bound
\begin{align*}
\mathcal{L}(\bs\phi)
=
\mathbb{E}_{q_{\bs\phi}(\bs\vartheta)}
\!\left[
\log p(\mathbf{y},\bs\vartheta)
-
\log q_{\bs\phi}(\bs\vartheta)
\right].
\end{align*}
Optimization is performed using reparameterization gradients and Adam \citep{kingma2014adam}.

To exploit the structure of the model, we use a blockwise variational factorization over the regression-coefficient blocks and the effect-specific hyperparameter blocks.

For each component $d$, the regression coefficients are modeled jointly by a multivariate Gaussian variational distribution,
\begin{align}
q_{\bs\phi}(\bs\beta_d)=\mathrm{N}\!\left(\bs\mu_{\beta,d},\bs\Sigma_{\beta,d}\right),
\label{eq:q_beta}
\end{align}
which allows posterior dependence between spline coefficients within the same marginal or copula block. Likewise, for each smooth effect we specify
\begin{align}
q_{\bs\phi}(\bs\zeta_{d,p,j})=\mathrm{N}\!\left(\bs\mu_{\zeta,d,p,j},\bs\Sigma_{\zeta,d,p,j}\right),
\label{eq:q_zeta}
\end{align}
allowing posterior dependence among the five local spike-and-slab quantities within each effect.

The resulting variational approximation is therefore
\begin{align*}
q_{\bs\phi}(\bs\vartheta)=\prod_{d\in\mathcal{D}}\left[q_{\bs\phi}(\bs\beta_d)\prod_{p\in\mathcal{P}_d}\prod_{j=1}^{J_{d,p}}q_{\bs\phi}(\bs\zeta_{d,p,j})\right].
\end{align*}
This factorization preserves dependence within each regression-coefficient block $\bs\beta_d$ and within each effect-specific hyperparameter block $\bs\zeta_{d,p,j}$, while assuming independence across these blocks.

\subsection{From fitted exceedance probabilities to halo-induced laws}\label{sec:halo_construction}

We now apply \eqref{eq:empirical_tail_measures} and \eqref{eq:empirical_halo_law} draw by draw. The primary empirical target is not the literal boundary of a continuous set, but the covariate law induced by the fitted halo. We restrict attention to the bivariate case $D=2$ and use the event index $e\in\mathcal E$ introduced above. At a single fitted parameter value, evaluating \eqref{eq:model_tail_scores} at $\x_i$ gives $\widehat\pi_i^{(e)}$. At posterior draw $s$, the same evaluation gives $\widehat\pi_i^{(e,s)}$, and the full construction below is repeated draw by draw. In the simulation study in \autoref{sec:simulation}, replacing fitted parameters by the data-generating parameters gives the benchmark representation on the same indexed covariate sample.

\paragraph*{Risk halos}

On the indexed sample, substituting the empirical quantities in \eqref{eq:empirical_tail_measures} into the population optimization in \eqref{eq:risk_halo_optimization} gives
\begin{equation}
  \widehat I_{\mathrm{R},\alpha}^{(e)}
  \in
  \operatorname*{arg\,min}_{I\subseteq\{1,\dots,n\}}
  \left\{
    \frac{|I|}{n}:
    \frac{\sum_{i\in I}\widehat\pi_i^{(e)}}
         {\sum_{i=1}^n\widehat\pi_i^{(e)}}
    \geq\alpha
  \right\}.
\label{eq:empirical_risk_halo_optimization}
\end{equation}
Because every observation has the same empirical mass $1/n$, a canonical solution of \eqref{eq:empirical_risk_halo_optimization} is obtained by ranking the observations according to their fitted tail scores. Sort the indices so that
\begin{align*}
\widehat{\pi}^{(e)}_{i_1} \ge \dots \ge \widehat{\pi}^{(e)}_{i_n},
\end{align*}
define
\begin{align*}
c_k^{(e)} &= \frac{\sum_{\ell=1}^k \widehat{\pi}^{(e)}_{i_\ell}}{\sum_{\ell=1}^n \widehat{\pi}^{(e)}_{i_\ell}}, \qquad k = 1,\dots,n.
\end{align*}
If $k^{\ast}_e = \min\{k : c_k^{(e)} \ge \alpha\}$, take
\begin{align*}
\widehat I^{(e)}_{\mathrm{R},\alpha}
&=\{i_1,\dots,i_{k^{\ast}_e}\},
&
\widehat B^{(e)}_{\mathrm{R},\alpha}
&=\{\x_i:i\in\widehat I^{(e)}_{\mathrm{R},\alpha}\}.
\end{align*}
For the index representation, ties at the cutoff may be broken arbitrarily; a deterministic covariate-set representation instead groups repeated covariate vectors as described after \eqref{eq:empirical_halo_law}. Because the empirical law is atomic, the captured $\widehat Q^{(e)}$-mass can exceed $\alpha$ at the final included observation, consistently with Remark~\ref{rem:risk_halo_ties}.

\paragraph*{Favorable halos}

The empirical marginal tail probability in \eqref{eq:empirical_tail_measures} gives the fitted favorable-halo threshold directly. Thus
\begin{equation}
  \widehat p^{(e)}
  =\frac{1}{n}\sum_{i=1}^n\widehat\pi_i^{(e)},
  \qquad
  \widehat I_{\mathrm F}^{(e)}
  =\{i:\widehat\pi_i^{(e)}>\widehat p^{(e)}\},
  \qquad
  \widehat B_{\mathrm F}^{(e)}
  =\{\x_i:i\in\widehat I_{\mathrm F}^{(e)}\}.
\label{eq:empirical_favorable_halo}
\end{equation}
Equation~\eqref{eq:empirical_favorable_halo} is therefore the direct empirical counterpart of \eqref{favhalos}. In the simulation study, the corresponding benchmark index sets are obtained by replacing $\widehat\pi_i^{(e)}$ with the exact scores from the data-generating model.

\paragraph*{Estimating halo densities}

Let $\widehat I^{(e,s)}$ denote either posterior-draw index set $\widehat I_{\mathrm{R},\alpha}^{(e,s)}$ or $\widehat I_{\mathrm F}^{(e,s)}$. When this set is nonempty, substituting it into \eqref{eq:empirical_halo_law} gives the draw-specific empirical halo law. For a continuous coordinate $X_j$, we smooth its pushforward using
\begin{equation}
  \widehat f_j^{(e,s)}(x;h_j)
  =\frac{1}{h_j|\widehat I^{(e,s)}|}
   \sum_{i\in\widehat I^{(e,s)}}
   K\!\left(\frac{x-x_{i,j}}{h_j}\right).
\label{eq:kde_draw}
\end{equation}
Here $K$ is the Gaussian kernel and $h_j>0$ is a bandwidth chosen automatically from the observed covariate values. Pointwise summaries of the collection of curves in \eqref{eq:kde_draw} describe variational-posterior uncertainty in the smoothed finite-sample halo law, conditional on the observed covariates and on the fitted halo being nonempty. If a favorable-halo draw is empty, no induced law exists for that draw; such draws are recorded separately and excluded from the density summaries.

For a single membership-weighted summary, let $w_i^{(e)}$ be the variational-posterior probability that observation $i$ belongs to the fitted halo, estimated by averaging the draw-specific membership indicators. We use the normalized weighted kernel density
\begin{align}
\widehat f_j^{(e)}(x;h_j)
=\frac{1}{h_j\sum_{i=1}^n w_i^{(e)}}
 \sum_{i=1}^n w_i^{(e)}
 K\!\left(\frac{x-x_{i,j}}{h_j}\right),
\label{eq:kde}
\end{align}
provided at least one weight is positive. This curve summarizes posterior halo membership, but it need not equal the pointwise posterior mean of \eqref{eq:kde_draw}, because the normalizing halo size can vary across draws. We estimate the overall sample density of $X_j$ using all observations with equal weights. Comparing the fitted halo-density summaries with the full-sample density estimates the contrast between $f_j^B$ and the marginal density of $X_j$ in Section~\ref{sec:halo_distribution}. Peaks relative to the overall covariate density indicate values that are over-represented inside the fitted halo; troughs indicate values that are under-represented. If the curves are close, the fitted halo has little visible structure along that covariate. This construction propagates variational-posterior uncertainty in the conditional tail scores while conditioning on the observed covariates; it does not posit a Bayesian model for $P_{\mathbf X}$.

\section{Simulation study}\label{sec:simulation}

\subsection{Simulation design, implemented model, and metrics of performance}
Here, we investigate the finite-sample performance of the proposed structured additive distributional regression model and assess how accurately it recovers the finite-sample halo representations used to construct the true risk and favorable halo densities. The main numerical findings are presented in Section~\ref{mainsim}; for now, we focus on the simulation design, summarize the implemented version of the proposed model, and outline the metrics used to assess performance.

We simulate a bivariate response $Y = (Y_1, Y_2)$ given a $J$-dimensional covariate vector $\X = (X_1,\dots,X_J)$. Each covariate is sampled independently from a uniform distribution on one of three intervals,
\begin{align*}
X_j \sim \mathrm{Unif}(a_j,b_j), \qquad (a_j,b_j) \in \{[-2\pi,0],[-\pi,\pi],[0,2\pi]\},
\end{align*}
with the interval for each $X_j$ chosen at random at the beginning of the experiment and then kept fixed across replications. Conditionally on $\X=\x$, the marginal distributions of $Y_1$ and $Y_2$ are specified as EGPDs with parameters $\bigl(\nu_d(\x),\xi_d(\x),\kappa_d(\x)\bigr)$, for $d=1,2$, where the carrier function is taken to be the power carrier $G_\kappa(v)=v^\kappa$ and the positivity constraint on $\kappa_d(\x)$ is enforced through a softplus link. The joint distribution of $(Y_1,Y_2)$ is constructed via a BB1 copula with parameters $\theta(\x)$ and $\delta(\x)$, providing positive tail dependence and allowing the strength of dependence to vary over $\x$. The true regression functions for all EGPD and copula parameters are fixed as
\begin{equation}
\begin{aligned}
\nu_{1}(\x)
&= \operatorname{softplus}\bigl(0.29 - 1.3\sin x_{3} + 0.7\cos x_{7}\bigr), \\
\xi_{1}(\x)
&= \operatorname{softplus}\bigl(0.41 - 0.9\cos x_{5} - 1.2\cos x_{6}\bigr) - 0.5, \\
\kappa_{1}(\x)
&= \operatorname{softplus}\bigl(0.63 + \cos x_{6} - 0.7\cos x_{9}\bigr), \\[0.4em]
\nu_{2}(\x)
&= \operatorname{softplus}\bigl(0.53 - 1.3\sin x_{2} - 0.8\sin x_{9}\bigr), \\
\xi_{2}(\x)
&= \operatorname{softplus}\bigl(-0.36 + \cos x_{6} - 0.8\sin x_{9}\bigr) - 0.5, \\
\kappa_{2}(\x)
&= \operatorname{softplus}\bigl(-0.48 + 0.8\sin x_{1} + 0.9\sin x_{9}\bigr), \\[0.4em]
\theta(\x)
&= \operatorname{softplus}\bigl(-0.66 + 0.8\sin x_{1} + \cos x_{9}\bigr), \\
\delta(\x)
&= \operatorname{softplus}\bigl(-0.64 + 0.7\cos x_{2} - 0.8\sin x_{3}\bigr) + 1.
\end{aligned}
\label{eq:true-regression-functions}
\end{equation}

For each parameter, only two covariates are truly relevant; all remaining covariates have no effect and therefore constitute noise for the variable-selection mechanism.

We consider three sample sizes, $n \in \{1000, 2000, 4000\}$. For each $n$ we generate $100$ independent datasets from the above model. For each simulated dataset we fit the full multivariate structured additive distributional regression model introduced in Section~\ref{sec:spike-slab-model}, including spike-and-slab priors on all nonlinear effects for the marginal and copula parameters. Each covariate effect is modeled by a univariate P-spline with $10$ interior knots; we apply the reparameterization presented in~\ref{sec:spline_constraints} to separate the unpenalized linear part from the nonlinear component that carries the spike-and-slab prior.

Posterior inference is performed by stochastic variational inference as in Section~\ref{section:SVIModel}. The variational family factorizes blockwise over marginal and copula components and within each component over regression coefficients and local spike-and-slab hyperparameters. We use $S=64$ Monte Carlo samples per iteration and run SVI for $20{,}000$ epochs. Optimization is carried out with Adam with a learning rate $0.001$, using one optimizer for all regression coefficients and a second optimizer for all hyperparameters.

Hyperprior choices largely mirror those in the simulation study of \citet{klein_2021}, with adaptations to our setting. For each smooth effect we use
\begin{align*}
  \psi^{2}_{d,p,j}
  &\sim \mathrm{Inverse\text{-}Gamma}(5,20), \\
  \omega_{d,p,j}
  &\sim \mathrm{Beta}(1,1), \\
  \log T_{d,p,j}
  &\sim \mathrm{Normal}(-1.5,0.5^{2}),
\end{align*}
and we fix $r_{d,p,j} = 10^{-4}$ for all components in the spike part of the prior, giving a strong separation between spike and slab. The inverse-gamma scale parameter is kept at $20$ to provide a slightly wider slab, which we found helpful for moderate sample sizes. The prior on $\log T_{d,p,j}$ is chosen so that the corresponding temperature $T_{d,p,j}$ is typically small, thereby making the Relaxed Bernoulli distribution close to the original Bernoulli inclusion prior. In particular, small values of $T_{d,p,j}$ force the relaxed inclusion variable to concentrate near $0$ and $1$, so that the continuous approximation closely mimics a discrete Bernoulli prior.

\begin{table}
\centering
\begin{tabular}{@{}cccccc@{}}
\toprule
\textbf{Response} & \textbf{Parameter} & \textbf{Covariate}
& $n = 1000$ & $n = 2000$ & $n = 4000$ \\ \midrule
\multirow{9}{*}{$y_1$}
  & \multirow{3}{*}{$\nu$}
      & $x_3$                              & 0.99 & 0.99 & 0.99 \\
  &   & $x_7$                              & 0.86 & 0.85 & 0.86 \\
  &   & $x_j,\ j \notin \{3,7\}$          & 0.03 & 0.01 & 0.01 \\ \cmidrule(l){2-6}
  & \multirow{3}{*}{$\xi$}
      & $x_5$                              & 0.94 & 0.94 & 0.95 \\
  &   & $x_6$                              & 0.99 & 0.99 & 0.99 \\
  &   & $x_j,\ j \notin \{5,6\}$          & 0.09 & 0.03 & 0.01 \\ \cmidrule(l){2-6}
  & \multirow{3}{*}{$\kappa$}
      & $x_6$                              & 0.98 & 0.97 & 0.97 \\
  &   & $x_9$                              & 0.78 & 0.77 & 0.75 \\
  &   & $x_j,\ j \notin \{6,9\}$          & 0.03 & 0.02 & 0.01 \\ \midrule
\multirow{9}{*}{$y_2$}
  & \multirow{3}{*}{$\nu$}
      & $x_2$                              & 0.99 & 0.99 & 0.99 \\
  &   & $x_9$                              & 0.90 & 0.94 & 0.93 \\
  &   & $x_j,\ j \notin \{2,9\}$          & 0.02 & 0.01 & 0.00 \\ \cmidrule(l){2-6}
  & \multirow{3}{*}{$\xi$}
      & $x_6$                              & 0.96 & 0.97 & 0.97 \\
  &   & $x_9$                              & 0.94 & 0.98 & 0.98 \\
  &   & $x_j,\ j \notin \{6,9\}$          & 0.20 & 0.10 & 0.05 \\ \cmidrule(l){2-6}
  & \multirow{3}{*}{$\kappa$}
      & $x_1$                              & 0.94 & 0.94 & 0.92 \\
  &   & $x_9$                              & 0.95 & 0.95 & 0.94 \\
  &   & $x_j,\ j \notin \{1,9\}$          & 0.01 & 0.01 & 0.01 \\ \midrule
\multirow{6}{*}{$c$}
  & \multirow{3}{*}{$\theta$}
      & $x_1$                              & 0.84 & 0.89 & 0.97 \\
  &   & $x_9$                              & 0.85 & 0.92 & 0.94 \\
  &   & $x_j,\ j \notin \{1,9\}$          & 0.61 & 0.25 & 0.09 \\ \cmidrule(l){2-6}
  & \multirow{3}{*}{$\delta$}
      & $x_2$                              & 0.80 & 0.80 & 0.87 \\
  &   & $x_3$                              & 0.82 & 0.92 & 0.97 \\
  &   & $x_j,\ j \notin \{2,3\}$          & 0.26 & 0.14 & 0.07 \\ \bottomrule
\end{tabular}
\caption{Median posterior inclusion probability of each nonlinear effect, averaged across simulation replications, for different sample sizes. Entries of the form $x_j$ denote covariates with truly active nonlinear effects for that parameter, whereas entries of the form $x_\ell,\ \ell\notin\{j,k\}$ aggregate covariates with truly inactive nonlinear effects. Values close to one indicate strong posterior support for inclusion, whereas values close to zero indicate strong posterior support for exclusion.}
\label{tab:inclusion_accuracy}
\end{table}

Finally, we close this section with a brief description of the performance measures used in the simulation study. For effect selection, we summarize posterior support for each smooth effect through the median posterior inclusion probability. For each simulation replication and each smooth effect, we draw $100$ samples from the fitted variational distribution $q_{\bs\phi}(\bs\zeta_{d,p,j})$ in \eqref{eq:q_zeta}, transform these draws to the corresponding inclusion probabilities, and compute their median, denoted by $\widetilde{p}_{d,p,j}$. Let $z_{d,p,j} \in \{0,1\}$ denote the true inclusion state, where $z_{d,p,j}=1$ if the effect is truly relevant and $z_{d,p,j}=0$ otherwise. Values of $\widetilde{p}_{d,p,j}$ close to one indicate strong posterior support for inclusion of the nonlinear effect, whereas values close to zero indicate strong posterior support for exclusion. We summarize this median posterior inclusion probability across simulation replications for each sample size.

To assess the finite-sample construction underlying the halo densities, we compare representations of the true and fitted halos on the same simulated covariate sample. This avoids conflating model error with numerical approximation of a continuous set. We use the Jaccard distance between the corresponding index sets. For two sets $A,B \subset \{1,\dots,n\}$, we define
\begin{align*}
  J(A,B) = 1 - \frac{|A \cap B|}{|A \cup B|}.
\end{align*}
We compute $J$ for risk halos and favorable halos, separately for each tail event $e\in\mathcal E$, and summarize its distribution across replications for $n \in \{1000, 2000, 4000\}$. Lower values of $J$ indicate better agreement between the fitted and benchmark halo representations.

For risk halos we additionally quantify how much tail probability is captured by a fitted risk-halo representation. Given a set of indices $A \subset \{1,\dots,n\}$, we define the captured tail mass for event $e$ as
\begin{align*}
  \mathrm{TMC}^{(e)}(A)
  &= \frac{\sum_{i \in A} \widehat{\pi}^{(e)}_i}
         {\sum_{i=1}^n \widehat{\pi}^{(e)}_i}.
\end{align*}
For a risk halo at level $\alpha$, we would ideally have $\mathrm{TMC}^{(e)}(\widehat I^{(e)}_{\mathrm{R},\alpha}) \approx \alpha$. We therefore consider the TMC error
\begin{align}
    \bigl|\mathrm{TMC}^{(e)}(\widehat I^{(e)}_{\mathrm{R},\alpha}) - \alpha\bigr|,
    \label{eq:tmc}
\end{align}
and summarize it over replications for each $n \in \{1000, 2000, 4000\}$. To propagate posterior uncertainty into the halo summaries, we draw $100$ samples from the fitted variational distributions $q_{\bs\phi}(\bs\beta_d)$ and $q_{\bs\phi}(\bs\zeta_{d,p,j})$ in \eqref{eq:q_beta} and \eqref{eq:q_zeta}, recompute the corresponding exceedance probabilities and finite-sample halo representations, and then summarize the resulting posterior distribution of each metric by its posterior mean and by the width of the associated $95\%$ posterior interval.

\subsection{Main numerical findings}\label{mainsim}

Table~\ref{tab:inclusion_accuracy} reports spike-and-slab effect selection performance in terms of the median posterior inclusion probability of each nonlinear effect, averaged across the $100$ simulation replications. For each replication and each smooth effect, the median is computed from $100$ samples drawn from the fitted variational distribution $q_{\bs\phi}(\bs\zeta_{d,p,j})$ in \eqref{eq:q_zeta}. Rows labeled by a single $x_j$ correspond to truly active nonlinear effects for that parameter, whereas the remaining-covariate rows aggregate, for each parameter, the median posterior inclusion probabilities over all covariates that are truly non-relevant. In this parameterization, values close to one indicate strong posterior support for inclusion of the nonlinear effect, while values close to zero indicate strong shrinkage toward exclusion.

Overall, Table~\ref{tab:inclusion_accuracy} shows a clear separation between relevant and non-relevant effects. The marginal parameters are selected most cleanly: active effects usually have median inclusion probabilities near one already at $n=1000$, while inactive effects are close to zero and shrink further as $n$ grows. The weakest active signals are $x_7$ in $\nu_1$ and $x_9$ in $\kappa_1$, which remain appreciably below one even at $n=4000$. Selection for the copula parameters is more demanding, especially at $n=1000$, where inactive effects retain larger inclusion probabilities; this difficulty largely disappears by $n=4000$.

We next evaluate the fitted finite-sample representations underlying the risk and favorable halo densities for $Y_1$, $Y_2$, and the joint event $\{Y_1>u_1,\ Y_2>u_2\}$. Thresholds are empirical $0.95$-quantiles and the risk-halo level is $\alpha=0.9$. For each replication, $100$ posterior draws from \eqref{eq:q_beta} and \eqref{eq:q_zeta} are used to recompute exceedance probabilities, finite-sample halo representations, Jaccard distances, and, for risk halos, the tail-mass error in \eqref{eq:tmc}.

Figure~\ref{fig:halos_jaccard_sim} shows that recovery of the finite-sample representations underlying the halo densities improves steadily with sample size. This holds for both risk and favorable halos and for all three components. The joint risk representation is recovered especially accurately, even at $n=1000$, whereas the joint favorable representation is the most difficult object because it depends on whether local exceedance probabilities exceed their average level. Posterior interval widths also contract with $n$, indicating reduced uncertainty in the fitted halo representations.

\begin{figure}[t]
  \centering
  \includegraphics[width=\linewidth]{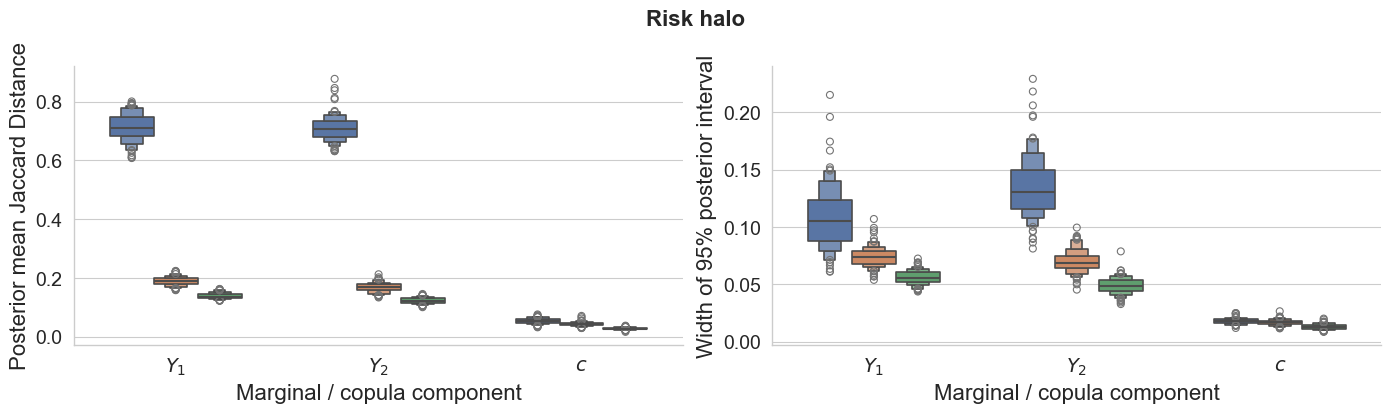}\\
  \includegraphics[width=\linewidth]{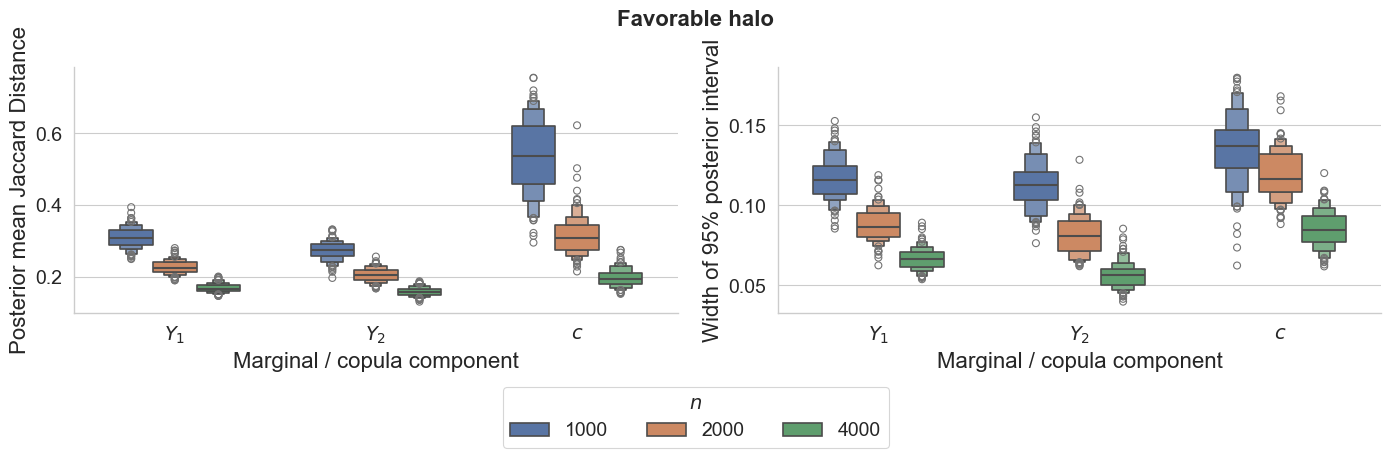}
  \caption{Posterior summaries of the Jaccard distance between benchmark and fitted finite-sample halo representations, for $n \in \{1000,2000,4000\}$, based on $100$ posterior draws per simulation replication. Top row: risk halos. Bottom row: favorable halos. In each row, the left panel shows boxenplots of the posterior mean Jaccard distance across simulation replications, while the right panel shows boxenplots of the width of the corresponding $95\%$ posterior interval.}
  \label{fig:halos_jaccard_sim}
\end{figure}

Figure~\ref{fig:tmc_sim} shows the corresponding tail-mass error for risk halos. The posterior mean error decreases with sample size for both margins and for the joint tail, and the posterior intervals become narrower. Thus the fitted risk-halo representations are accurate not only geometrically, through set overlap, but also probabilistically, through the amount of tail probability they capture.

\begin{figure}[t]
  \centering
  \includegraphics[width=\linewidth]{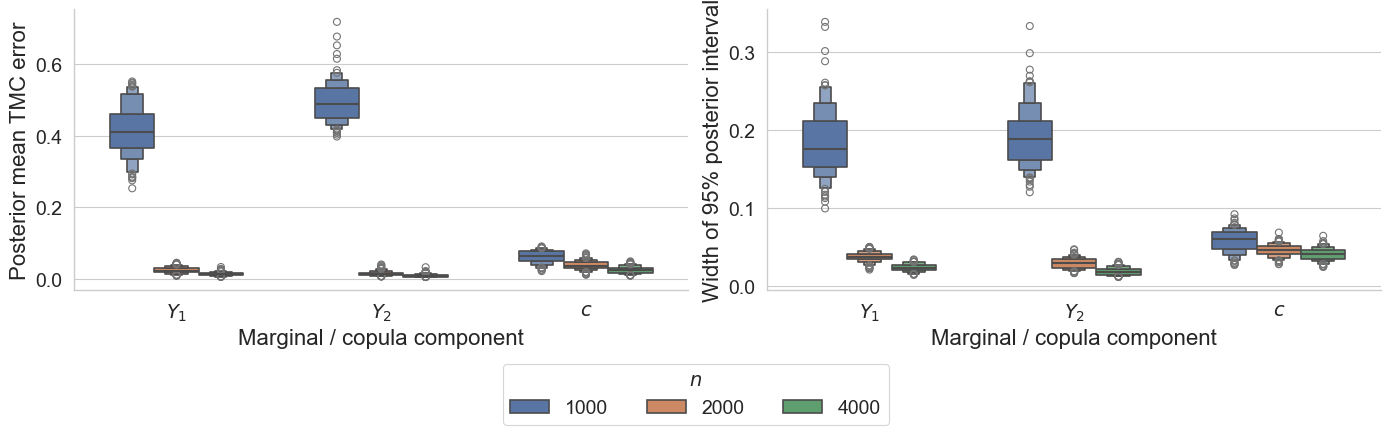}
  \caption{Posterior summaries of the tail-mass error from the target level $\alpha=0.9$, as defined in \eqref{eq:tmc}, for $Y_1$, $Y_2$, and the joint tail event, with $n \in \{1000,2000,4000\}$, based on $100$ posterior draws per simulation replication. The left panel shows boxenplots of the posterior mean TMC error across simulation replications, while the right panel shows boxenplots of the width of the corresponding $95\%$ posterior interval.}
  \label{fig:tmc_sim}
\end{figure}

Taken together, the simulation results show that the proposed framework identifies the active nonlinear covariate effects and translates this structure into increasingly accurate recovery of the halo representations used to estimate halo densities, with posterior uncertainty decreasing as the sample size grows. Appendix~\ref{app:simulation_additional} reports an additional simulation scenario with a different allocation of relevant covariates and effect shapes, leading to the same qualitative conclusions.

\section{Air-pollution extremes in Edinburgh}\label{sec:application}

\subsection{Data description and rationale for the analysis}

We illustrate the proposed methodology using an air-quality dataset collected at an urban monitoring station in Edinburgh, UK. Air pollution is of major public health relevance, with studies documenting its health effects in Scotland and linking it to respiratory hospital admissions \citep{lee2009, huang2022}. The City Council has openly acknowledged in its 2024 \emph{Air Quality Action Plan} the current challenges and the serious public health consequences of inadequate pollutant management \citep{edinburgh2024}:

\begin{quote}\footnotesize 
  ``Improvements in air quality have already been achieved across Edinburgh, which should be celebrated. These improvements have largely been focused on nitrogen dioxide [...] Even with these successes, from a health perspective there is no safe level of certain regulated pollutants, with Particulate Matter (PM$_{2.5}$) now being among those which health experts are most concerned about. [...] Exposure to PM$_{2.5}$ can cause damage to the brain, our cardiovascular and respiratory system, provoking, for example stroke, lung cancer and chronic obstructive pulmonary disease (COPD).''
\end{quote}
The data were gathered from Air Quality in Scotland (\url{http://www.scottishairquality.co.uk}) and consist of hourly measurements from 2025; we conduct a complete-case analysis, yielding a final sample of $n = 8\,081$ data points. 

Our goals are to learn the risk and favorable halo densities associated with extreme pollution events using a fitted multivariate distributional regression model for fine particulate matter (PM$_{2.5}$ in $\mu\text{g}/\text{m}^3$) and nitrogen dioxide (NO$_2$ in $\mu\text{g}/\text{m}^3$) concentrations, and to identify which covariates drive the marginal tails and their dependence. As covariates, we use hour of day, day of the week, month, air temperature, wind direction, and wind speed; %
for hour, day of the week, month, and wind direction, we employ cyclic P-splines to respect their periodic structure, while temperature and wind speed are modeled with standard P-splines. 

\subsection{\texorpdfstring{Marginal and joint halo densities for PM$_{2.5}$ and NO$_2$}{Marginal and joint halo densities for PM2.5 and NO2}}
\subsubsection*{Modeling and inference}
In the halo analysis we focus on high pollution levels, setting marginal thresholds equal to the empirical $0.95$-quantiles of PM$_{2.5}$ and NO$_2$, respectively, and fixing $\alpha = 0.9$ so that the resulting risk halos define the covariate region capturing $90\%$ of pollutant exceedances.
The resulting thresholds are approximately $14.88$ and $33.24\,\mu\text{g}/\text{m}^3$ for PM$_{2.5}$ and NO$_2$, respectively. These represent clearly elevated concentrations within the local hourly distribution and are numerically approximately $3.0$ and $3.3$ times the corresponding WHO guideline values for annual mean exposure \citep{who2021air}.

\begin{figure}
  \centering
  \includegraphics[width=\linewidth]{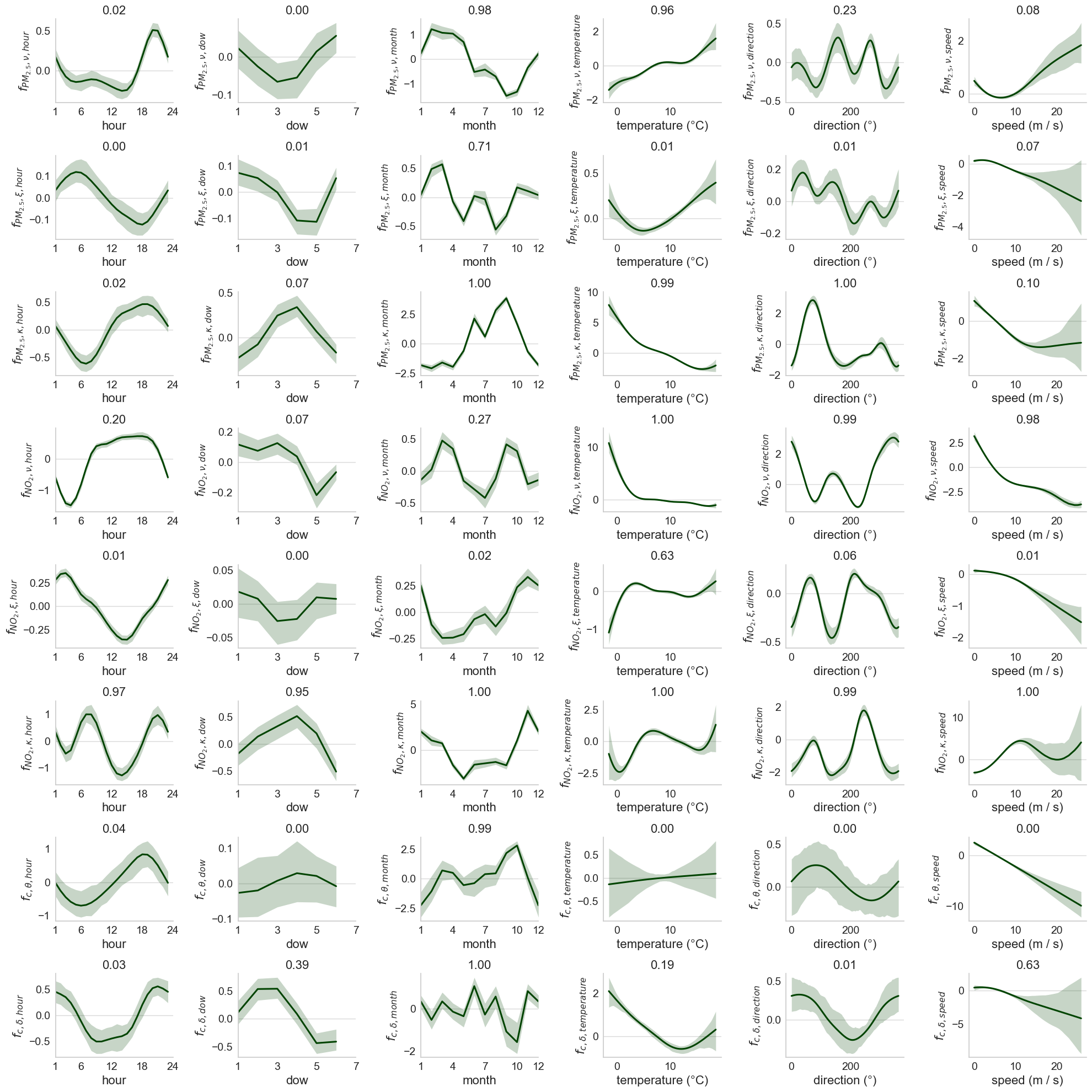}
  \caption{Estimated smooth effects for all parameters and covariates. Rows: $(\nu,\xi,\kappa)$ of PM$_{2.5}$ (first three), $(\nu,\xi,\kappa)$ of NO$_2$ (next three), and BB1 copula parameters $(\theta,\delta)$ (last two). Columns: hour, day of the week, month, temperature, wind direction, wind speed. Titles show the posterior mean inclusion probabilities induced by the spike-and-slab prior for the nonlinear effects.}
  \label{fig:effects_app}
\end{figure}

To learn halo densities from the data, we use the proposed joint spike-and-slab EGPD regression model described in Section~\ref{model2}. Specifically, for each marginal component of the bivariate response, we fit an EGPD regression with parameters $(\nu_d, \xi_d, \kappa_d)$ and couple the marginals using a BB1 copula to model the joint distribution; each parameter is linked to the covariates via a structured additive predictor with P-splines and spike-and-slab effect selection on the nonlinear spline coefficients, using the prior hierarchy introduced in Section~\ref{sec:spike-slab-model}. For model fitting, we use virtually the same settings as in the simulation study in Section~\ref{section:SVIModel}. In particular, we use the same number of interior knots, and the posterior approximation is performed using the same stochastic variational inference scheme as in the simulation study, with analogous prior specifications. The only exception is that for the slab variance we use a narrower inverse-gamma prior than in the simulation study, $\psi^2_{d,p,j} \sim \mathrm{Inverse\text{-}Gamma}(a_{d, p, j}, b_{d, p, j})$ with $a_{d, p, j} = 5$ and $b_{d, p, j} = 10$, whereas the simulations used $\mathrm{Inverse\text{-}Gamma}(a_{d, p, j}, b_{d, p, j})$ with $a_{d, p, j} = 5$ and $b_{d, p, j} = 20$; given the larger effective sample size in this application, the choice $(5,10)$ provides a more concentrated prior on $\psi^2$, which stabilizes the spline amplitudes and helps to retain weaker but practically important effects. %

\subsubsection*{Estimated effects, relevant covariates, and diagnostics}

Figure~\ref{fig:effects_app} summarizes the estimated smooth effects for all parameters and covariates. Each panel shows the posterior mean function and a pointwise $95\%$ credible band, and the panel title reports the estimated inclusion probability of the corresponding nonlinear effect. This figure also summarizes the effect-selection component of the analysis: the spike-and-slab prior determines which nonlinear covariate effects are retained for each marginal and copula parameter, thereby distinguishing nonlinear covariate effects associated with the marginal parameters from those associated with joint dependence.

The first three rows correspond to the parameters $(\nu_1,\xi_1,\kappa_1)$ of the PM$_{2.5}$ marginal, the next three rows to $(\nu_2,\xi_2,\kappa_2)$ of the NO$_2$ marginal, and the last two rows to the BB1 copula parameters $(\theta,\delta)$. A clear difference emerges between the two marginals. For PM$_{2.5}$, the dominant effects are seasonal and meteorological: month and temperature have very high inclusion probabilities for $\nu_1$ and $\kappa_1$, while the nonlinear components for hour of day and day of week receive little posterior support. This suggests that PM$_{2.5}$ extremes are driven mainly by broader seasonal and atmospheric conditions rather than by short-term temporal variation. For NO$_2$, by contrast, the model detects a richer structure. In particular, temperature, wind direction, and wind speed are strongly relevant for $\nu_2$, and the tail parameter $\kappa_2$ shows high inclusion probabilities for nearly all covariates, including hour of day and day of week. This is consistent with NO$_2$ being shaped both by traffic-related temporal cycles and by meteorological dispersion conditions.

The copula effects are more parsimonious. For both $\theta$ and $\delta$, month stands out as the most consistently relevant covariate, whereas most other effects are strongly penalized. Hence, the dependence between extreme PM$_{2.5}$ and NO$_2$ levels appears to vary primarily on a seasonal scale, with much weaker evidence for additional short-term modulation. Thus, the spike-and-slab component not only serves as a regularization device but also provides an interpretable comparison of the nonlinear covariate effects associated with marginal tail behavior and extremal dependence.

Additional posterior checks for the fitted marginal and joint distributions, together with exploratory diagnostics of covariate dependence within the fitted halos, are reported in~\ref{app:application_additional}.

\subsubsection*{Visualizing fitted tail halos}

Based on the fitted regression model, we compute marginal and joint exceedance probabilities at the observed covariate values and form the fitted finite-sample representations described in Section~\ref{sec:halo_construction}. To visualize the resulting risk and favorable halos, we use the membership-weighted summary in \eqref{eq:kde}; pointwise posterior credible bands are obtained from the draw-specific halo densities in \eqref{eq:kde_draw}.

\begin{figure}[t]
  \centering
  \includegraphics[width=0.7\linewidth]{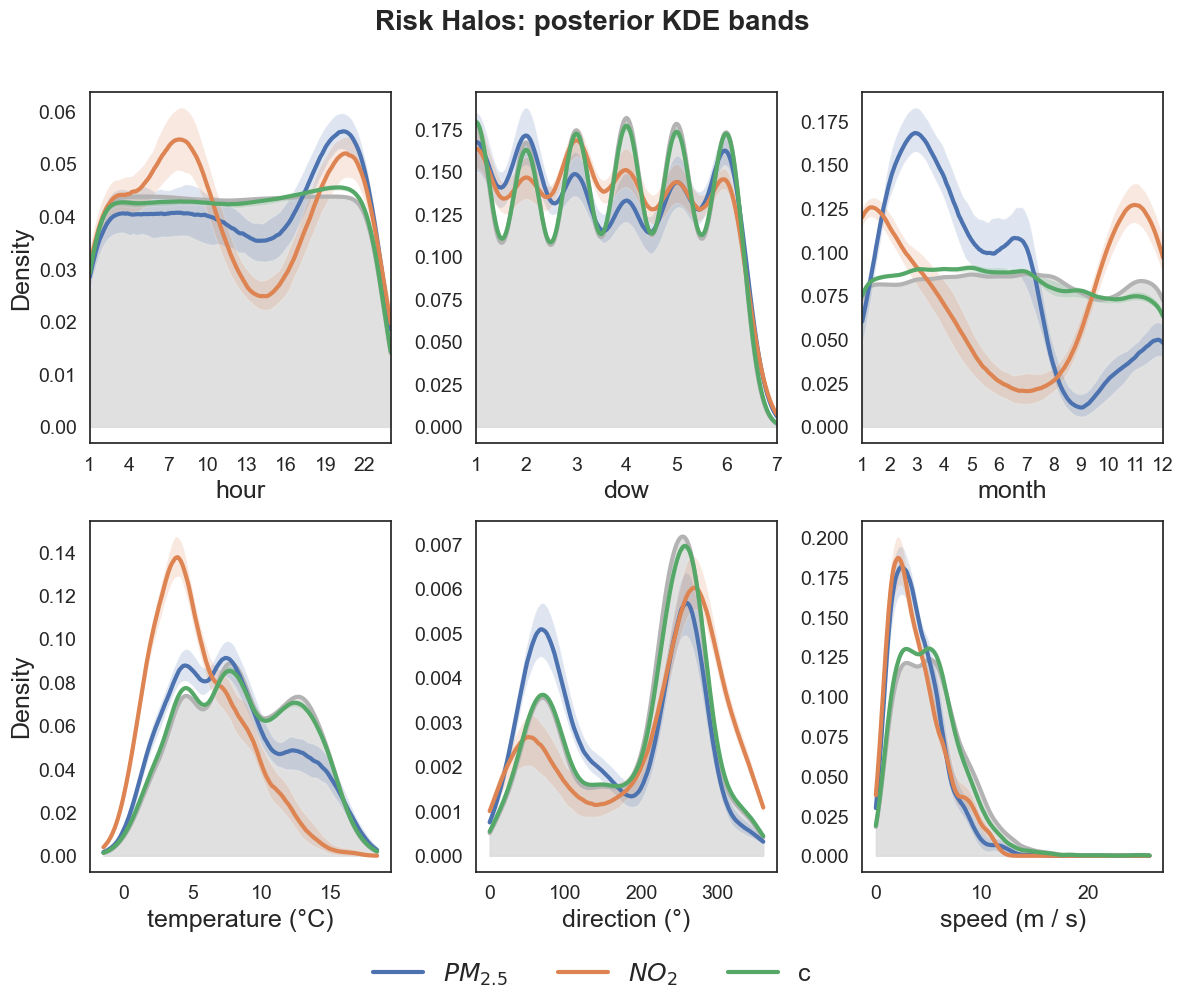}
  \caption{Posterior risk-halo densities for PM$_{2.5}$, NO$_2$, and their joint tail for the halo analysis with $\alpha=0.90$ (colored curves with shaded credible bands), compared with the overall covariate distribution (gray).}
  \label{fig:h_risk_app_90}
\end{figure}

\begin{figure}[t]
  \centering
  \includegraphics[width=0.7\linewidth]{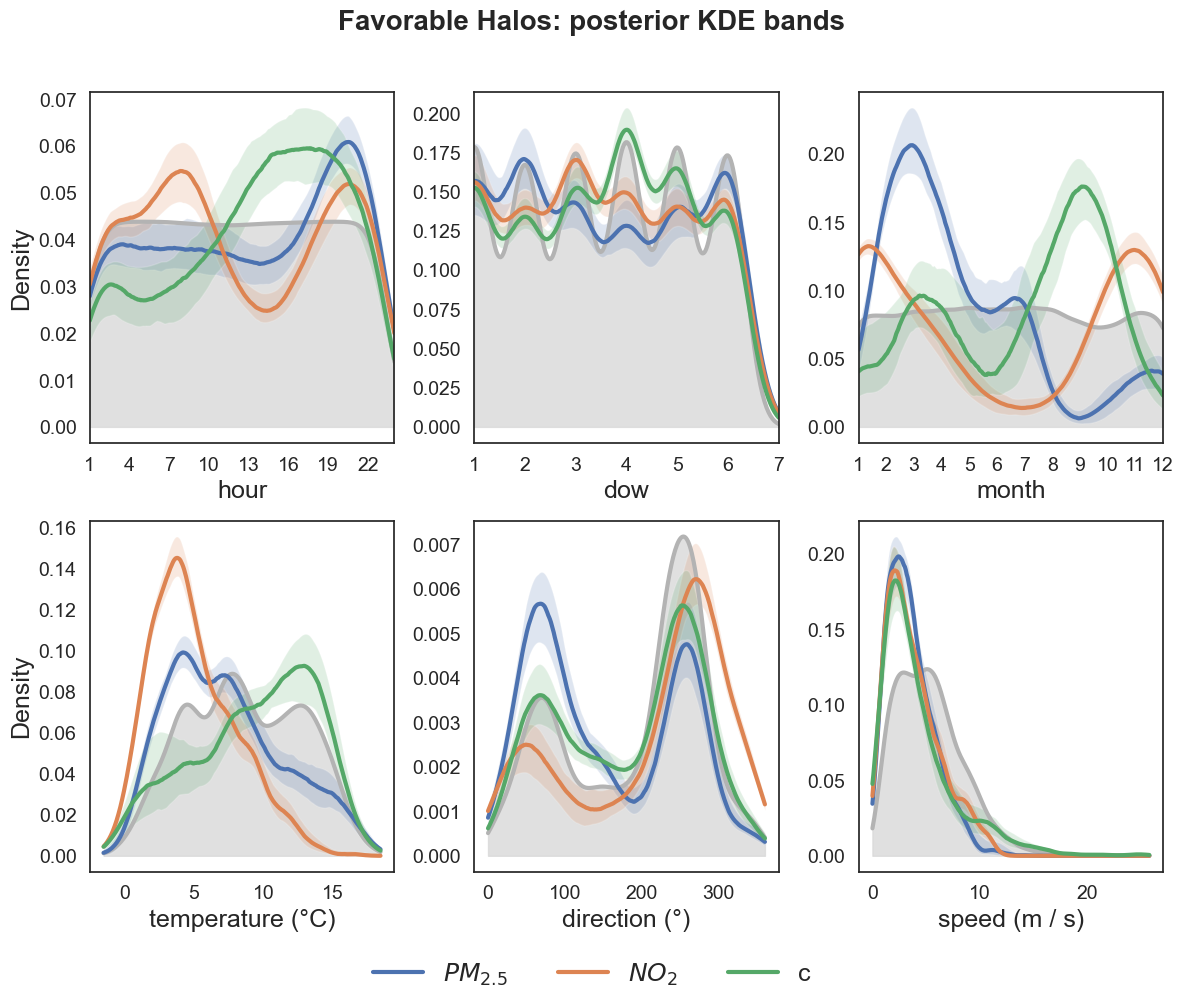}
  \caption{Posterior favorable-halo densities for PM$_{2.5}$, NO$_2$, and their joint tail (colored curves with shaded credible bands), compared with the overall covariate distribution (gray).}
  \label{fig:h_fav_app}
\end{figure}

Figure~\ref{fig:h_risk_app_90} shows the posterior risk-halo densities. For hour of day, the NO$_2$ risk-halo density has pronounced morning and evening peaks, whereas the PM$_{2.5}$ risk-halo density is most concentrated in the evening. Seasonally, NO$_2$ risk is elevated around February and November, while PM$_{2.5}$ is more concentrated in spring. Temperature and wind direction are particularly discriminating for NO$_2$: the risk density is enhanced around $5^\circ\mathrm{C}$ and around westerly winds, approximately $90^\circ$. Wind speed is less distinctive, with risk-halo densities closer to the overall marginal distribution. The joint risk-halo density is also closer to the background covariate distribution than the marginal risk-halo densities, suggesting that joint extremes occur under a broader set of covariate configurations than individual pollutant extremes.

Figure~\ref{fig:h_fav_app} gives the corresponding fitted favorable-halo densities. The joint favorable-halo density is concentrated in the late afternoon and early evening for hour of day, and it shows a visible early-autumn peak for month, highlighting covariate configurations where the joint exceedance probability is elevated relative to its average level. Temperature remains more influential for NO$_2$ than for PM$_{2.5}$, and wind-direction densities remain centered around westerly directions. The posterior bands are wider for several joint favorable-halo densities than for the corresponding risk-halo densities, reflecting the greater sensitivity of favorable halos to posterior fluctuations around the marginal exceedance level.

The risk- and favorable-halo tables in the Supplementary Material, \autoref{tab:risk_halo_examples_application} and \autoref{tab:favourable_halo_examples_application}, report representative observations with the highest posterior halo-membership probabilities. They make the role of temperature particularly clear. The NO$_2$ risk and favorable examples are concentrated under cold conditions, with temperatures close to zero in the strongest case and mostly westerly winds. The PM$_{2.5}$ examples are mostly located in spring months and under mild-to-cool temperatures, while the joint favorable examples occur in spring or autumn and in afternoon or evening hours. These observations are illustrative examples of the fitted halo regions rather than additional inferential targets.

Overall, the application illustrates the distinction between covariate configurations that carry most of the tail mass and configurations where tail risk is elevated relative to the background. This distinction is especially visible for the joint tail, where the risk halo is comparatively stable but the favorable halo is more localized and more uncertain.

\section{Discussion}\label{sec:discussion}

This paper introduced tail halos as covariate-defined regions associated with extreme events. Risk halos identify the smallest regions that account for a chosen fraction of the tail mass, while favorable halos identify regions where the conditional exceedance probability is larger than its marginal level. Halo densities are a key part of the framework: they turn a set-valued tail object into distributional summaries of the covariates inside the halo. This is what makes the method usable beyond very low-dimensional covariate spaces. The simulations show that the proposed Bayesian structured additive distributional regression model recovers relevant nonlinear covariate effects and yields increasingly accurate finite-sample halo representations as the sample size grows. The Edinburgh air-pollution application illustrates how the approach separates covariate configurations that explain most exceedances from those where exceedance risk is elevated relative to the background.

The proposal connects extreme-value regression with ideas from random sets and set-valued inference \citep{molchanov2005}. This connection is useful because the scientific target is not only a parameter or a fitted smooth function, but a region of covariate space and the probability law induced by that region. The distinction matters in interpretation. A regression effect can indicate that a covariate changes a tail parameter, whereas a halo density shows how that covariate is represented among the configurations that carry tail mass or elevated tail risk.

Several methodological questions remain. One is how best to summarize halo laws when the covariates themselves are structured objects, such as curves, spatial fields, images, or networks. In such settings, one-dimensional marginal densities may be replaced by functional summaries, spatial contrasts, graph features, or projections chosen for scientific interpretability. A second direction is uncertainty quantification for features of the halo itself, such as volume, connected components, boundary stability, or contrasts between risk and favorable halos. These questions are close in spirit to statistical problems for random closed sets, but with the additional structure that the set is generated by a tail probability.

Another natural extension is to make the inferential model more directly halo-aware. In the present formulation, nonlinear-effect selection acts on the penalized spline components of the structured additive predictors, and the halos are derived from the resulting fitted tail probabilities. Future priors or loss functions could instead target the stability, sparsity, or interpretability of the halo densities themselves. This would bring the statistical model even closer to the scientific questions that motivate tail halos: where tail events concentrate, where their probability is unusually high, and how those regions should be summarized when the covariate space is too rich to inspect directly.

\bibliographystyle{agsm}
\bibliography{bibliography}

\appendix
\section{Appendix}

\subsection{Proof of the risk-halo characterization}
\label{app:pl_convergence_risk_halo}

\begin{proof}[Proof of Proposition~\ref{prop:risk_halo_characterization}]
Let $p_u$, $\pi_u$, and $Q_u$ be as defined in \eqref{mu_mar}--\eqref{eq:tail_law}. Let $\X_u\sim Q_u$ and $Z_u=\pi_u(\X_u)$. Suppose that the law of $Z_u$ is atom-free. For $\alpha\in(0,1)$, set
\begin{equation}\label{eq:proof_threshold_set}
  \tau=\tau_{u,\alpha}=F_{Z_u}^{-1}(1-\alpha),
  \qquad
  H=\{\x:\pi_u(\x)>\tau\}.
\end{equation}
Because $F_{Z_u}$ is continuous, the quantile in \eqref{eq:proof_threshold_set} satisfies
\begin{equation}\label{eq:proof_exact_tail_mass}
  Q_u(H)=Q_u(Z_u>\tau)=\alpha.
\end{equation}
Moreover, $Z_u>0$ holds $Q_u$-almost surely, since
\begin{equation}\label{eq:proof_positive_score}
  Q_u(Z_u=0)
  =\frac{1}{p_u}\int_{\{\pi_u=0\}}\pi_u\,\mathrm{d}P_{\X}=0.
\end{equation}
Thus $F_{Z_u}(0)=0$, and because $1-\alpha>0$, \eqref{eq:proof_threshold_set} implies that $\tau>0$.

Let $A$ be any feasible set in \eqref{eq:risk_halo_optimization}, so that $Q_u(A)\geq\alpha=Q_u(H)$ by \eqref{eq:proof_exact_tail_mass}. Define
\begin{equation}\label{eq:proof_nonnegative_integrand}
  I_A
  :=\{\mathbf{1}_H-\mathbf{1}_A\}(\pi_u-\tau)
  \geq0.
\end{equation}
Indeed, on $H\setminus A$ both factors in \eqref{eq:proof_nonnegative_integrand} are positive, on $A\setminus H$ both are nonpositive, and elsewhere their product is zero. Consequently,
\begin{equation}\label{eq:proof_optimality_chain}
  0\leq \int I_A\,\mathrm{d}P_{\X}
  =p_u\{Q_u(H)-Q_u(A)\}-\tau\{P_{\X}(H)-P_{\X}(A)\}
  \leq-\tau\{P_{\X}(H)-P_{\X}(A)\}.
\end{equation}
The final inequality in \eqref{eq:proof_optimality_chain} uses $Q_u(A)\geq Q_u(H)$. Since $\tau>0$, \eqref{eq:proof_optimality_chain} gives $P_{\X}(H)\leq P_{\X}(A)$, proving optimality.

Now suppose that $A$ is also a minimizer. Then $P_{\X}(A)=P_{\X}(H)$. Substituting this equality and \eqref{eq:proof_exact_tail_mass} into the identity in \eqref{eq:proof_optimality_chain} gives
\[
  0
  \leq \int I_A\,\mathrm{d}P_{\X}
  =p_u\{\alpha-Q_u(A)\}
  \leq0,
\]
where the last inequality follows from the feasibility of $A$. Hence $Q_u(A)=\alpha$ and $\int I_A\,\mathrm{d}P_{\X}=0$. Since $I_A\geq0$ by \eqref{eq:proof_nonnegative_integrand}, it follows that $I_A=0$ $P_{\X}$-almost everywhere. By the strict sign of $\pi_u-\tau$ off the boundary level set, the symmetric difference between $A$ and $H$ can therefore lie only in $\{\pi_u=\tau\}$, up to a $P_{\X}$-null set. Atom-freeness gives
\[
  0=Q_u(\pi_u=\tau)
  =\frac{\tau}{p_u}P_{\X}(\pi_u=\tau).
\]
The positivity of $\tau$, established using \eqref{eq:proof_positive_score}, therefore implies $P_{\X}(\pi_u=\tau)=0$. Hence $P_{\X}(A\mathbin{\triangle}H)=0$, proving uniqueness up to null sets. Finally, if $\alpha_1<\alpha_2$, then $\tau_{u,\alpha_1}\geq\tau_{u,\alpha_2}$; the definition of $H$ in \eqref{eq:proof_threshold_set} then proves nestedness.

For completeness, suppose now that $\mathcal{X}$ is closed and that $Q_u$ has full support on $\mathcal{X}$. Let $\alpha_m\uparrow1$ and write $H_m=\mathscr H_{u,\alpha_m}$. Since $H_m\subseteq\mathcal{X}$ and $\mathcal{X}$ is closed, every outer limit point lies in $\mathcal{X}$. Conversely, fix $\x\in\mathcal{X}$ and any relatively open neighborhood $G$ of $\x$. Full support gives $Q_u(G)>0$. If $G\cap H_m=\varnothing$ for infinitely many $m$, then along that subsequence
\begin{equation}\label{eq:proof_pk_bound}
  \alpha_m
  =Q_u(H_m)
  \leq Q_u(\mathcal{X}\setminus G)
  =1-Q_u(G)
  <1,
\end{equation}
where the equality uses \eqref{eq:proof_exact_tail_mass}. The fixed upper bound in \eqref{eq:proof_pk_bound} contradicts $\alpha_m\uparrow1$. Thus every neighborhood of every point of $\mathcal{X}$ eventually intersects $H_m$. The neighborhood characterization of Painlev\'e--Kuratowski convergence \citep[see, e.g.,][]{molchanov2005} now yields
\[
  \mathscr H_{u,\alpha}
  \xrightarrow[\mathrm{PK}]{}
  \mathcal{X}
  \qquad\text{as }\alpha\uparrow1.
\]
\end{proof}

\subsection{Spline constraints and penalty reparameterization}\label{sec:spline_constraints}

This section records the implementation details used to make the additive predictors identifiable and to express each spline effect in the penalized/unpenalized form used by the Gaussian priors. These steps are straightforward but included here for completeness.

For a smooth effect $f_{i,d,p,j}=\X_{d,p,j}\bs\beta_{d,p,j}$, identifiability is enforced by centering the fitted effect over the observed covariates,
\begin{align*}
\sum_{i=1}^n f_{i,d,p,j}^{\rm new}(x_{i,d,p,j},\bs\beta_{d,p,j})=0.
\end{align*}
Let $\C_{d,p,j}$ be the vector of column means of $\X_{d,p,j}$. A QR decomposition of $\C_{d,p,j}$ gives an orthogonal basis $\Q_{d,p,j}=[\mathbf Z^{(a)}_{d,p,j},\mathbf Z^{(b)}_{d,p,j}]$, where $\mathbf Z^{(b)}_{d,p,j}$ spans the subspace satisfying the centering constraint. We therefore use the constrained design and penalty
\begin{align*}
\X_{d,p,j}^{\rm new}=\X_{d,p,j}\mathbf Z^{(b)}_{d,p,j}, \qquad \K_{d,p,j}^{\rm new} =\left(\mathbf Z^{(b)}_{d,p,j}\right)^\T \K_{d,p,j}\mathbf Z^{(b)}_{d,p,j}.
\end{align*}
For one distributional parameter $p$ in component $d$, the full design and penalty matrices are then
\begin{align}
\X_{d,p} = \left[\mathbf 1,\X_{d,p,1}^{\rm new},\dots,\X_{d,p,J_{d,p}}^{\rm new}\right]
\label{eq:X_p}
\end{align}
\begin{align*}
\K_{d,p}(\bs\tau_{d,p}) = \operatorname{blockdiag}\left\{0,\tau^{-2}_{d,p,1}\K^{\rm new}_{d,p,1},\dots, \tau^{-2}_{d,p,J_{d,p}}\K^{\rm new}_{d,p,J_{d,p}}\right\}.
\end{align*}
To separate penalized and unpenalized components, let
\begin{align*}
\K^{\rm new}_{d,p,j} = 
\U_{d,p,j}
\begin{pmatrix}
\boldsymbol\Lambda_{d,p,j}^{+} & 0 \\
0 & 0
\end{pmatrix}
\U_{d,p,j}^\T,
\end{align*}
where $\boldsymbol\Lambda_{d,p,j}^{+} = \operatorname{diag}(\lambda_1,\dots,\lambda_r)$ contains the positive eigenvalues of $\K^{\rm new}_{d,p,j}$. Writing $\U_{d,p,j}=[\U^{+}_{d,p,j},\U^{0}_{d,p,j}]$, with $\U^{0}_{d,p,j}$ spanning the null space of the penalty, we define
\begin{align*}
\X_{d,p,j}^{\rm penalized} = \X_{d,p,j}^{\rm new} \U^{+}_{d,p,j} \left(\boldsymbol\Lambda_{d,p,j}^{+}\right)^{-1/2}, \qquad \X_{d,p,j}^{\rm unpenalized} = \X_{d,p,j}^{\rm new} \U^{0}_{d,p,j}.
\end{align*}
Thus
\begin{align*}
\X_{d,p,j}^{\rm transformed} = \left[\X_{d,p,j}^{\rm penalized}, \X_{d,p,j}^{\rm unpenalized} \right].
\end{align*}
Under this reparameterization, the penalty becomes the identity on the penalized block and zero on the unpenalized block, giving the usual mixed-model representation of spline effects used in the structured additive predictor.

\subsection{Additional simulation scenario}\label{app:simulation_additional}

This section complements the simulation study in the main paper by changing the allocation of relevant covariates and effect shapes across the marginal and copula parameters. The aim is to check that the main conclusions are not tied to the particular regression functions used in the primary experiment. The overall simulation design is the same as in the main text: we consider $n \in \{1000, 2000, 4000\}$, generate $100$ independent datasets for each sample size, and fit the same multivariate distributional regression model with the same spike-and-slab prior specification, hyperparameters, and SVI settings.

The only change is the assignment of relevant covariates and nonlinear effect functions to the distributional parameters. In this scenario the true parameter functions are given by
\begin{equation*}
\begin{aligned}
\nu_{1}(\x)
&= \operatorname{softplus}\bigl(0.77 - \sin x_{1} - 1.2\cos x_{4}\bigr), \\
\xi_{1}(\x)
&= \operatorname{softplus}\bigl(-0.44 + 0.7\cos x_{5} - 1.2\cos x_{6}\bigr) - 0.5, \\
\kappa_{1}(\x)
&= \operatorname{softplus}\bigl(0.17 + 0.7\cos x_{5} + 0.8\sin x_{7}\bigr), \\[0.4em]
\nu_{2}(\x)
&= \operatorname{softplus}\bigl(0.93 - 0.8\sin x_{4} + 0.9\sin x_{8}\bigr), \\
\xi_{2}(\x)
&= \operatorname{softplus}\bigl(-0.15 + \cos x_{7} - 0.8\sin x_{9}\bigr) - 0.5, \\
\kappa_{2}(\x)
&= \operatorname{softplus}\bigl(0.42 + 0.8\sin x_{3} + 0.9\sin x_{4}\bigr), \\[0.4em]
\theta(\x)
&= \operatorname{softplus}\bigl(-0.32 + 0.8\sin x_{3} + \cos x_{5}\bigr), \\
\delta(\x)
&= \operatorname{softplus}\bigl(-0.77 + 0.7\cos x_{1} - 0.8\sin x_{2}\bigr) + 1.
\end{aligned}
\label{eq:true-regression-functions-app}
\end{equation*}

\subsubsection{Main numerical findings}

As in the main simulation study, we first evaluate spike-and-slab effect selection and then study the finite-sample halo representations used to construct the risk- and favorable-halo densities.

Table~\ref{tab:inclusion_accuracy_app} reports the median posterior inclusion probability of each nonlinear effect, averaged across the $100$ simulation replications. For each replication and each smooth effect, the median is computed from $100$ samples drawn from the fitted variational distribution of the corresponding spike-and-slab inclusion parameter. Rows labeled by a single $x_j$ correspond to truly active nonlinear effects for that parameter, whereas the remaining-covariate rows aggregate, for each parameter, the median posterior inclusion probabilities over all covariates that are truly non-relevant. In this parameterization, values close to one indicate strong posterior support for inclusion of the nonlinear effect, while values close to zero indicate strong shrinkage toward exclusion.

\begin{table}
\centering
\begin{tabular}{@{}cccccc@{}}
\toprule
\textbf{Response} & \textbf{Parameter} & \textbf{Covariate}
& $n = 1000$ & $n = 2000$ & $n = 4000$ \\ \midrule
\multirow{9}{*}{$y_1$}
  & \multirow{3}{*}{$\nu$}
      & $x_1$                              & 0.98 & 0.98 & 0.98 \\
  &   & $x_4$                              & 0.98 & 0.98 & 0.98 \\
  &   & $x_j,\ j \notin \{1,4\}$          & 0.03 & 0.00 & 0.00 \\ \cmidrule(l){2-6}
  & \multirow{3}{*}{$\xi$}
      & $x_5$                              & 0.72 & 0.76 & 0.79 \\
  &   & $x_6$                              & 0.98 & 0.98 & 0.98 \\
  &   & $x_j,\ j \notin \{5,6\}$          & 0.13 & 0.06 & 0.02 \\ \cmidrule(l){2-6}
  & \multirow{3}{*}{$\kappa$}
      & $x_5$                              & 0.83 & 0.82 & 0.80 \\
  &   & $x_7$                              & 0.96 & 0.96 & 0.95 \\
  &   & $x_j,\ j \notin \{5,7\}$          & 0.02 & 0.01 & 0.01 \\ \midrule
\multirow{9}{*}{$y_2$}
  & \multirow{3}{*}{$\nu$}
      & $x_4$                              & 0.94 & 0.96 & 0.96 \\
  &   & $x_8$                              & 0.98 & 0.98 & 0.97 \\
  &   & $x_j,\ j \notin \{4,8\}$          & 0.03 & 0.01 & 0.00 \\ \cmidrule(l){2-6}
  & \multirow{3}{*}{$\xi$}
      & $x_7$                              & 0.94 & 0.97 & 0.97 \\
  &   & $x_9$                              & 0.99 & 0.99 & 0.99 \\
  &   & $x_j,\ j \notin \{7,9\}$          & 0.09 & 0.05 & 0.02 \\ \cmidrule(l){2-6}
  & \multirow{3}{*}{$\kappa$}
      & $x_3$                              & 0.98 & 0.98 & 0.98 \\
  &   & $x_4$                              & 0.78 & 0.80 & 0.83 \\
  &   & $x_j,\ j \notin \{3,4\}$          & 0.03 & 0.01 & 0.01 \\ \midrule
\multirow{6}{*}{$c$}
  & \multirow{3}{*}{$\theta$}
      & $x_3$                              & 0.76 & 0.85 & 0.83 \\
  &   & $x_5$                              & 0.89 & 0.97 & 0.97 \\
  &   & $x_j,\ j \notin \{3,5\}$          & 0.43 & 0.19 & 0.06 \\ \cmidrule(l){2-6}
  & \multirow{3}{*}{$\delta$}
      & $x_1$                              & 0.75 & 0.88 & 0.96 \\
  &   & $x_2$                              & 0.74 & 0.79 & 0.87 \\
  &   & $x_j,\ j \notin \{1,2\}$          & 0.27 & 0.15 & 0.08 \\ \bottomrule
\end{tabular}
\caption{Median posterior inclusion probability of each nonlinear effect, averaged across simulation replications, for different sample sizes in the additional simulation scenario. Entries of the form $x_j$ denote covariates with truly active nonlinear effects for that parameter, whereas entries of the form $x_\ell,\ \ell\notin\{j,k\}$ aggregate covariates with truly inactive nonlinear effects. Values close to one indicate strong posterior support for inclusion, whereas values close to zero indicate strong posterior support for exclusion.}
\label{tab:inclusion_accuracy_app}
\end{table}

The qualitative message is very similar to that in the main text, but the median representation makes the separation between active and inactive effects even more transparent. Across most marginal parameters, the truly active nonlinear effects have median posterior inclusion probabilities already close to one at $n=1000$, while the truly inactive nonlinear effects have medians close to zero and are shrunk more strongly as the sample size increases. The clearest marginal signals occur in $\nu_1$, $\nu_2$, $\xi_2$ and several components of $\kappa_2$, whereas the weakest relevant effects are found in $\xi_1$ and $\kappa_1$, where one of the two active covariates has a median posterior inclusion probability only in the range $0.7$--$0.8$. As in the main simulation study, the copula parameters remain the most difficult part of the model. In particular, for $\theta$ and $\delta$ the active nonlinear effects are clearly identified, but the non-relevant covariates retain noticeably larger medians at $n=1000$ than in the marginal blocks, indicating weaker separation at the smallest sample size. This distinction fades as $n$ increases, with the medians for non-relevant copula effects moving much closer to zero by $n=4000$. Overall, the additional scenario confirms the same pattern as in the main simulation study: marginal nonlinear effects are easier to identify than dependence effects, and the posterior separation between active and inactive nonlinear effects sharpens substantially as the sample size grows.

As in the main simulation scenario, we form finite-sample representations of the risk and favorable halos for the two margins and for the joint tail event $\{Y_1>u_1,\ Y_2>u_2\}$, using the model-based exceedance probabilities. These representations are the objects from which the corresponding halo densities are constructed. Thresholds $u_1$ and $u_2$ are set to the empirical $0.95$-quantiles of $Y_1$ and $Y_2$, and we fix $\alpha = 0.9$ for the risk halos. To incorporate posterior uncertainty, for each simulated dataset we draw $100$ posterior samples from the fitted variational approximation to the regression coefficients and inclusion parameters, and recompute the corresponding halo representations.

Figure~\ref{fig:halos_app_jaccard_appendix} reports posterior summaries of the Jaccard distance between benchmark and fitted finite-sample halo representations, based on $100$ posterior draws per simulation replication. As in the main text, the left panels show boxenplots of the posterior mean Jaccard distance across replications, while the right panels show the width of the corresponding $95\%$ posterior intervals. For the two margins, the same pattern as in the main text reappears clearly: the posterior mean Jaccard distance decreases steadily as $n$ grows, both for risk and favorable representations. The posterior interval widths also contract with increasing sample size, indicating that uncertainty in the representations underlying the halo-density estimates decreases as more data become available. For the joint component, the contrast between risk and favorable representations is again particularly marked. The joint risk representation is recovered very well even at $n=1000$, with small posterior mean Jaccard distances and narrow posterior intervals. In contrast, the joint favorable representation remains much more difficult to estimate at small $n$, with clearly larger posterior means and wider posterior intervals, although both improve substantially as the sample size increases.

\begin{figure}[t]
  \centering
  \includegraphics[width=\linewidth]{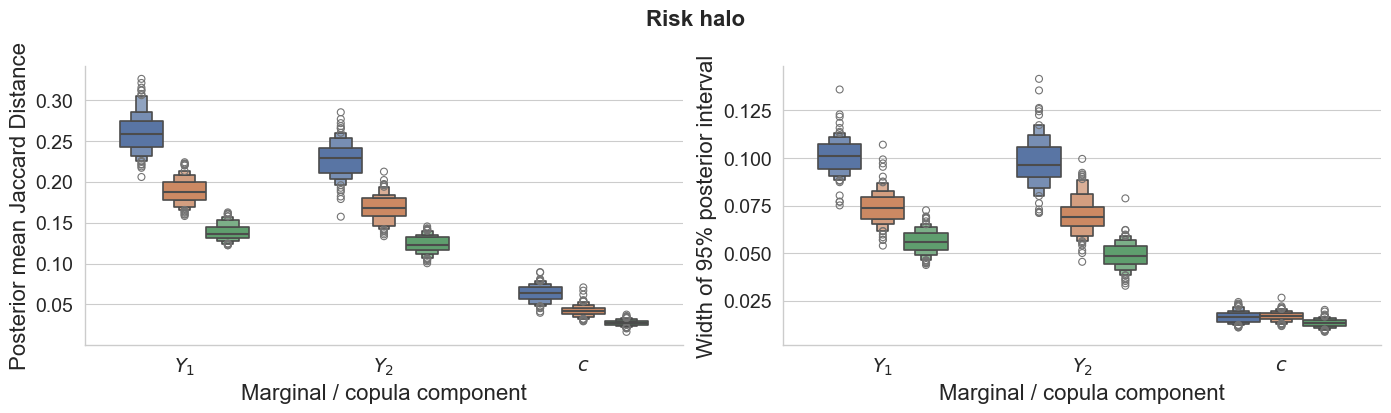}\\
  \includegraphics[width=\linewidth]{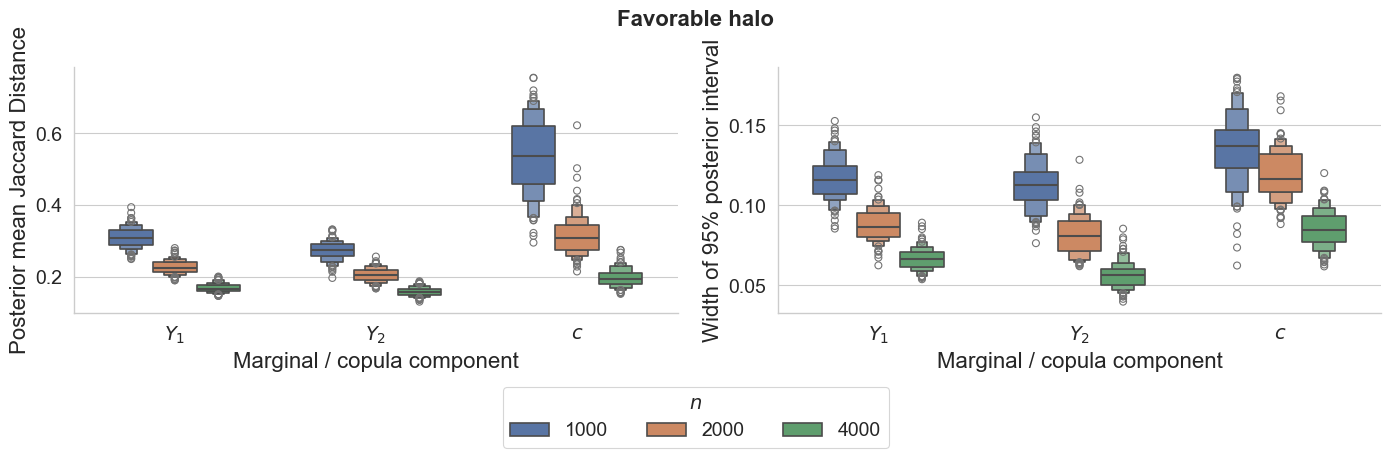}
  \caption{Posterior summaries of the Jaccard distance between benchmark and fitted finite-sample halo representations, for $n \in \{1000,2000,4000\}$, based on $100$ posterior draws per simulation replication. Top row: risk halos. Bottom row: favorable halos. In each row, the left panel shows boxenplots of the posterior mean Jaccard distance across replications, while the right panel shows boxenplots of the width of the corresponding $95\%$ posterior interval.}
  \label{fig:halos_app_jaccard_appendix}
\end{figure}

Figure~\ref{fig:tmc_app_appendix} complements the set-based comparison by showing posterior summaries of the deviation $|\mathrm{TMC}-\alpha|$ for the risk-halo representations. The left panel reports the posterior mean TMC error across replications, while the right panel reports the width of the corresponding $95\%$ posterior interval. For the marginal risk representations, the captured tail mass is already reasonably close to the target level at $n=1000$ and becomes increasingly accurate as $n$ grows. For the joint risk representation, the posterior mean error is larger than for the marginals, but it also decreases steadily with sample size. The same contraction is visible in the posterior interval widths. Hence, even in this alternative scenario, the proposed model captures the correct amount of tail probability with increasing precision and reduced posterior uncertainty, supporting the halo-density summaries built from these representations.
 
\begin{figure}[t]
  \centering
  \includegraphics[width=\linewidth]{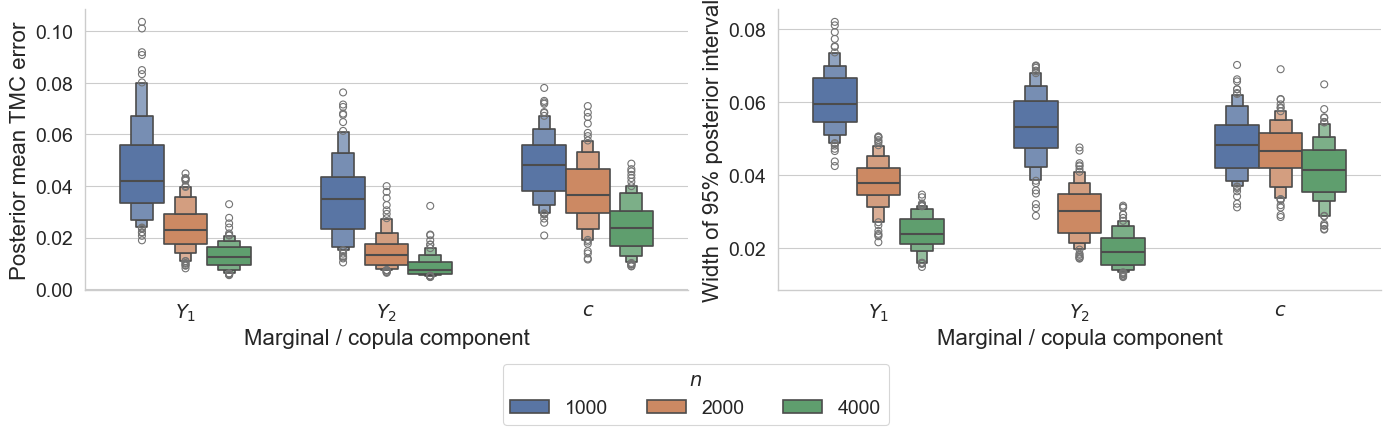}
  \caption{Posterior summaries of the tail-mass error from the target level $\alpha=0.9$, for $Y_1$, $Y_2$, and the joint tail event, with $n \in \{1000,2000,4000\}$, based on $100$ posterior draws per simulation replication. The left panel shows boxenplots of the posterior mean TMC error across replications, while the right panel shows boxenplots of the width of the corresponding $95\%$ posterior interval.}
  \label{fig:tmc_app_appendix}
\end{figure}

Overall, this additional simulation confirms the conclusions of the main study. The spike-and-slab mechanism is able to identify the active nonlinear covariate effects with high posterior support, while the fitted model yields increasingly accurate finite-sample representations for the risk- and favorable-halo densities as the sample size grows. In particular, the distinction between the relatively stable joint risk representation and the more demanding joint favorable representation is reproduced clearly in this alternative configuration, now together with a clear reduction in posterior uncertainty as $n$ increases.

\subsection{Additional diagnostics for the Edinburgh application}\label{app:application_additional}

\subsubsection{Marginal Q--Q diagnostics}

To assess the adequacy of the marginal EGPD fits, we use probability integral transform (PIT) Q--Q plots. For each marginal $Y_d$ we compute
\begin{align*}
\epsilon_{i,d}=\Phi^{-1}\left(F_d\bigl(y_{i,d}\mid\widehat{\nu}_d(\mathbf{x}_i),\widehat{\xi}_d(\mathbf{x}_i),\widehat{\kappa}_d(\mathbf{x}_i)\bigr)\right), \qquad i=1,\dots,n,
\end{align*}
and compare the ordered Gaussianized PIT values to the quantiles of a $\mathrm{Normal}(0,1)$ distribution. Figure~\ref{fig:qq_app_90} shows the resulting Q--Q plots for the two margins, together with pointwise posterior bands obtained by sampling $100$ times from the variational approximation and recomputing the transformed quantiles. For both PM$_{2.5}$ and NO$_2$, the empirical quantiles follow the theoretical line very closely over the central part of the distribution, indicating that the fitted EGPD marginals provide a good overall calibration. The posterior bands remain narrow over most of the range and widen only in the most extreme quantiles, reflecting the greater posterior uncertainty in the tails. Mild departures from the diagonal are visible in the far lower tail, particularly for NO$_2$, where a few extreme observations induce a more pronounced deviation. In the upper tail, which is the most relevant region for the halo analysis, the agreement with the theoretical line remains good for both marginals, with only moderate uncertainty at the very largest quantiles. Overall, these diagnostics indicate that the EGPD marginals provide a satisfactory fit, especially in the upper tail where the halo-density summaries are most sensitive.

\begin{figure}[t]
  \centering
  \includegraphics[width=0.7\linewidth]{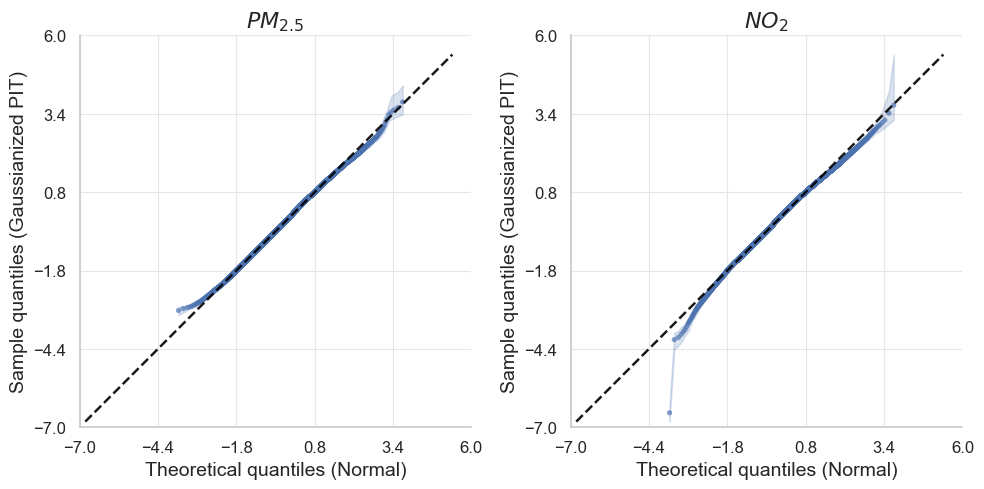}
  \caption{Q--Q plots of the Gaussianized PIT values for the PM$_{2.5}$ (left) and NO$_2$ (right) marginals for the halo analysis with $\alpha=0.90$, with pointwise posterior bands obtained from the variational approximation. The near-linear behavior along the diagonal indicates an adequate marginal fit, with mild deviations confined to the most extreme quantiles.}
  \label{fig:qq_app_90}
\end{figure}

\subsubsection{Posterior inclusion probabilities in the Edinburgh application}

Figure~\ref{fig:effects_rho_app} displays the variational posterior distributions of the inclusion probabilities for all nonlinear effects in the Edinburgh application. Effects with posterior meadian close to $0$ or $1$ generally show distributions strongly concentrated near the corresponding boundary, indicating clear evidence for exclusion or inclusion. Intermediate inclusion probabilities are associated with more dispersed or skewed posterior distributions, reflecting greater uncertainty about whether the corresponding nonlinear effect should be active. The strongest and most stable inclusion signals occur for month and temperature in the PM$_{2.5}$ marginal, for temperature, wind direction, and wind speed in the NO$_2$ marginal, and for month in the copula parameters.

\begin{figure}[t]
  \centering
  \includegraphics[width=\linewidth]{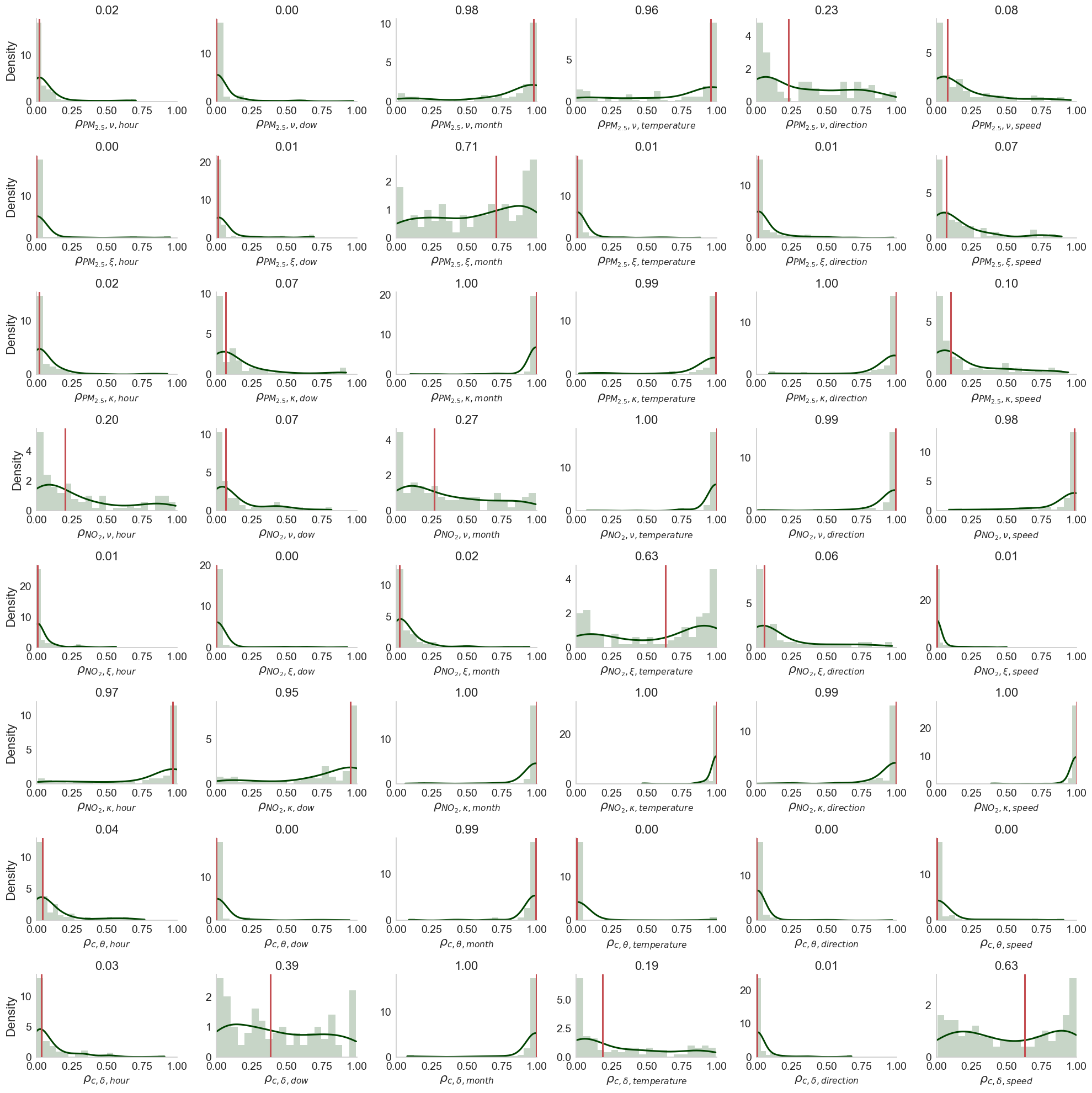}
  \caption{Variational posterior distributions of the inclusion probabilities for all nonlinear effects in the Edinburgh application. Rows: $(\nu,\xi,\kappa)$ of PM$_{2.5}$ (first three), $(\nu,\xi,\kappa)$ of NO$_2$ (next three), and BB1 copula parameters $(\theta,\delta)$ (last two). Columns: hour, day of the week, month, temperature, wind direction, and wind speed. Each panel shows a histogram of sampled inclusion probabilities together with a smooth density estimate; the vertical line marks the posterior median, which is also reported in the panel title.}
  \label{fig:effects_rho_app}
\end{figure}

\subsubsection{Posterior checks}

We assessed the fitted model conditionally on the observed covariates, with the resulting diagnostics reported in Figure~\ref{fig:ppc_application_90}. Marginal fit was evaluated through posterior predictive checks based on replicated responses generated from the fitted EGPD--copula model under draws from the variational approximation. For the joint distribution, we instead evaluated the model-implied conditional probabilities directly from the fitted EGPD marginals and BB1 copula, avoiding additional Monte Carlo variability from response simulation.

\begin{figure}[t]
\centering
\includegraphics[width=\linewidth]{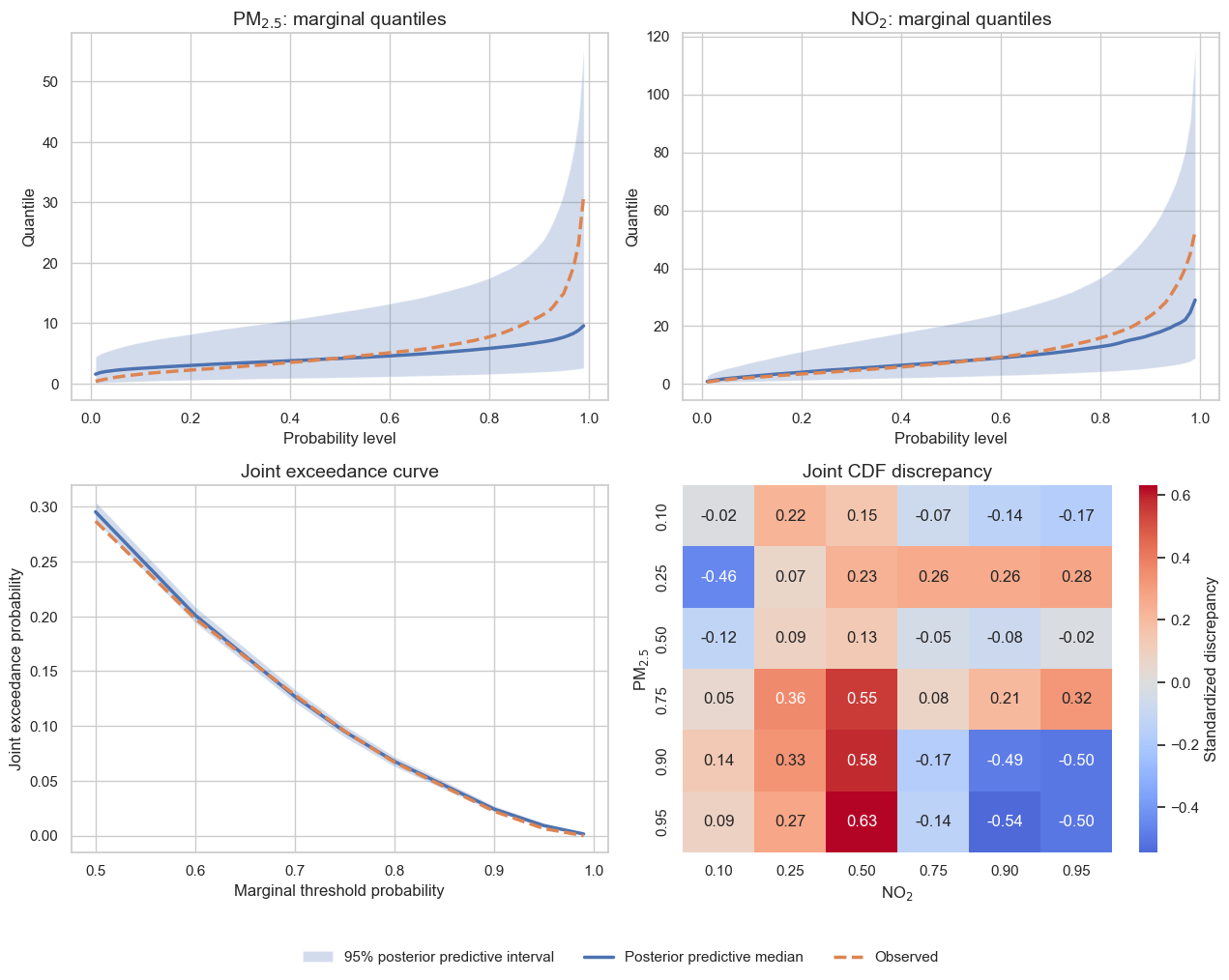}
\caption{Posterior checks for the Edinburgh air-pollution application corresponding to the halo analysis with $\alpha=0.90$, performed conditionally on the observed covariates. Top row: observed marginal quantile curves for PM$_{2.5}$ and NO$_2$ compared with posterior predictive medians and pointwise $95\%$ posterior predictive intervals. Bottom left: observed joint exceedance curve compared with the posterior median and pointwise $95\%$ intervals of the model-implied joint exceedance probabilities, computed from the fitted EGPD marginals and BB1 copula. Bottom right: standardized discrepancy between the observed empirical joint distribution function and the model-implied joint distribution function on a grid of marginal probability levels.}
\label{fig:ppc_application_90}
\end{figure}

The marginal posterior predictive checks indicate that the fitted model reproduces the bulk and central parts of the marginal distributions well. Discrepancies are mainly concentrated in the upper tails, where the observed marginal quantiles tend to lie above the posterior predictive medians, while remaining within the posterior predictive uncertainty bands. The model-implied joint exceedance curve closely tracks the observed curve across marginal threshold levels, suggesting that the fitted copula captures the scale of joint tail probabilities well. The discrepancy grid for the joint distribution function shows moderate localized deviations, but no large systematic lack of fit; the most visible discrepancies occur in the upper part of the grid.

For the joint distribution function check, let $q_d(p)$ denote the empirical $p$-quantile of response $Y_d$, for $d=1,2$. For each pair of probability levels $(p_1,p_2)$, the observed empirical joint distribution function is
\begin{align*}
\widehat F_{\mathrm{obs}}(p_1,p_2)
=
\frac{1}{n}
\sum_{i=1}^n
\mathbbm{1}
\left\{
Y_{i1}\le q_1(p_1),\,
Y_{i2}\le q_2(p_2)
\right\}.
\end{align*}
For posterior draw $s$, the corresponding model-implied joint distribution function is computed as
\begin{align*}
\widehat F_{\mathrm{mod}}^{(s)}(p_1,p_2)
=
\frac{1}{n}
\sum_{i=1}^n
C^{(s)}_{\theta(\mathbf{x}_i),\delta(\mathbf{x}_i)}
\left[
F_1^{(s)}\{q_1(p_1)\mid \mathbf{x}_i\},
F_2^{(s)}\{q_2(p_2)\mid \mathbf{x}_i\}
\right],
\end{align*}
where $F_1^{(s)}$ and $F_2^{(s)}$ are the fitted EGPD marginals and $C^{(s)}$ is the BB1 copula under posterior draw $s$. The heatmap reports the standardized discrepancy
\begin{align*}
D(p_1,p_2)
=
\frac{
\widehat F_{\mathrm{obs}}(p_1,p_2)
-
\operatorname{median}_s\left\{
\widehat F_{\mathrm{mod}}^{(s)}(p_1,p_2)
\right\}
}{
Q_{0.975}\left\{
\widehat F_{\mathrm{mod}}^{(s)}(p_1,p_2)
\right\}
-
Q_{0.025}\left\{
\widehat F_{\mathrm{mod}}^{(s)}(p_1,p_2)
\right\}
}.
\end{align*}
Values close to zero indicate agreement between the observed empirical joint distribution function and the model-implied joint distribution function, relative to posterior uncertainty.

\paragraph{Representative observations within fitted halos}

Tables~\ref{tab:risk_halo_examples_application} and~\ref{tab:favourable_halo_examples_application} report, for each component, the observations with the highest posterior halo-membership probabilities. The examples emphasize the role of temperature: NO$_2$ halo observations are concentrated under colder conditions and predominantly westerly winds, whereas PM$_{2.5}$ examples occur mainly in spring under mild-to-cool temperatures. Joint favourable-halo examples are more heterogeneous, but are often associated with spring or autumn and afternoon or evening hours. In Table~\ref{tab:favourable_halo_examples_application}, the selected observations represent covariate configurations with elevated fitted exceedance probability relative to the background, not necessarily the largest observed pollutant values.

\begin{table}
\centering
\resizebox{\linewidth}{!}{%
\begin{tabular}{@{}cccccccccc@{}}
\toprule
\textbf{Component} & \textbf{Obs.} & \textbf{Hour} & \textbf{Day} & \textbf{Month}
& \textbf{Temp.} & \textbf{Direction} & \textbf{Speed}
& \textbf{PM$_{2.5}$} & \textbf{NO$_2$} \\
& & & & & $(^\circ\mathrm{C})$ & $(^\circ)$ & $(\mathrm{m/s})$
& & \\ \midrule
\multirow{3}{*}{PM$_{2.5}$}
  & 3120 & 11 & 7 & 5  & 10.2 & 160.9 & 1.4 & 8.066  & 2.786  \\
  & 1857 & 6  & 1 & 3  & 6.0  & 230.9 & 2.8 & 12.193 & 15.412 \\
  & 1849 & 21 & 7 & 3  & 6.5  & 80.1  & 2.1 & 22.665 & 14.971 \\ \midrule
\multirow{3}{*}{NO$_2$}
  & 8080 & 23 & 2 & 12 & 0.9  & 277.2 & 5.5 & 7.854  & 52.868 \\
  & 7385 & 13 & 7 & 11 & 4.0  & 260.7 & 5.1 & 2.123  & 15.113 \\
  & 7387 & 15 & 7 & 11 & 4.4  & 236.8 & 3.9 & 2.099  & 12.416 \\ \midrule
\multirow{3}{*}{$c$}
  & 8080 & 23 & 2 & 12 & 0.9  & 277.2 & 5.5 & 7.854  & 52.868 \\
  & 4515 & 9  & 5 & 7  & 14.8 & 107.0 & 3.3 & 1.108  & 2.687  \\
  & 4487 & 4  & 4 & 7  & 10.8 & 241.3 & 1.7 & 5.542  & 13.259 \\ \bottomrule
\end{tabular}%
}
\caption{Representative observations with the highest posterior risk-halo membership probabilities in the Edinburgh application. For each component, the table reports the three observations with the largest posterior halo-membership probabilities. The covariate values characterize the corresponding fitted halo configurations, while the observed PM$_{2.5}$ and NO$_2$ concentrations are reported for reference.}
\label{tab:risk_halo_examples_application}
\end{table}

\begin{table}
\centering
\resizebox{\linewidth}{!}{%
\begin{tabular}{@{}cccccccccc@{}}
\toprule
\textbf{Component} & \textbf{Obs.} & \textbf{Hour} & \textbf{Day} & \textbf{Month}
& \textbf{Temp.} & \textbf{Direction} & \textbf{Speed}
& \textbf{PM$_{2.5}$} & \textbf{NO$_2$} \\
& & & & & $(^\circ\mathrm{C})$ & $(^\circ)$ & $(\mathrm{m/s})$
& & \\ \midrule
\multirow{3}{*}{PM$_{2.5}$}
  & 1421 & 3  & 2 & 3  & 7.4  & 254.0 & 7.9 & 3.821  & 2.227  \\
  & 2025 & 13 & 1 & 3  & 9.2  & 244.2 & 1.6 & 3.679  & 9.682  \\
  & 2019 & 7  & 1 & 3  & 5.8  & 341.2 & 1.1 & 5.590  & 34.433 \\ \midrule
\multirow{3}{*}{NO$_2$}
  & 8080 & 23 & 2 & 12 & 0.9  & 277.2 & 5.5 & 7.854  & 52.868 \\
  & 6530 & 3  & 5 & 10 & 5.5  & 279.5 & 6.4 & 4.575  & 23.034 \\
  & 1313 & 10 & 4 & 2  & 2.5  & 268.7 & 4.3 & 3.656  & 37.459 \\ \midrule
\multirow{3}{*}{$c$}
  & 6430 & 17 & 7 & 10 & 10.2 & 117.4 & 5.2 & 7.429  & 11.609 \\
  & 6133 & 15 & 2 & 9  & 13.2 & 185.4 & 3.9 & 4.882  & 4.074  \\
  & 2513 & 21 & 1 & 4  & 6.5  & 229.5 & 0.5 & 2.642  & 5.486  \\ \bottomrule
\end{tabular}%
}
\caption{Representative observations with the highest posterior risk-halo membership probabilities in the Edinburgh application. For each component, the table reports the three observations with the largest posterior halo-membership probabilities. The covariate values characterize the corresponding fitted halo configurations, while the observed PM$_{2.5}$ and NO$_2$ concentrations are reported for reference.}
\label{tab:favourable_halo_examples_application}
\end{table}

\subsubsection{Covariate dependence within halos}

We first report the empirical correlation matrix of the full covariate sample in Figure~\ref{fig:overall_covariate_correlations}. This provides a baseline for interpreting whether the dependence patterns observed inside the halos are already present in the original covariate distribution or are amplified by conditioning on halo membership.

\begin{figure}[t]
\centering
\includegraphics[width=0.35\linewidth]{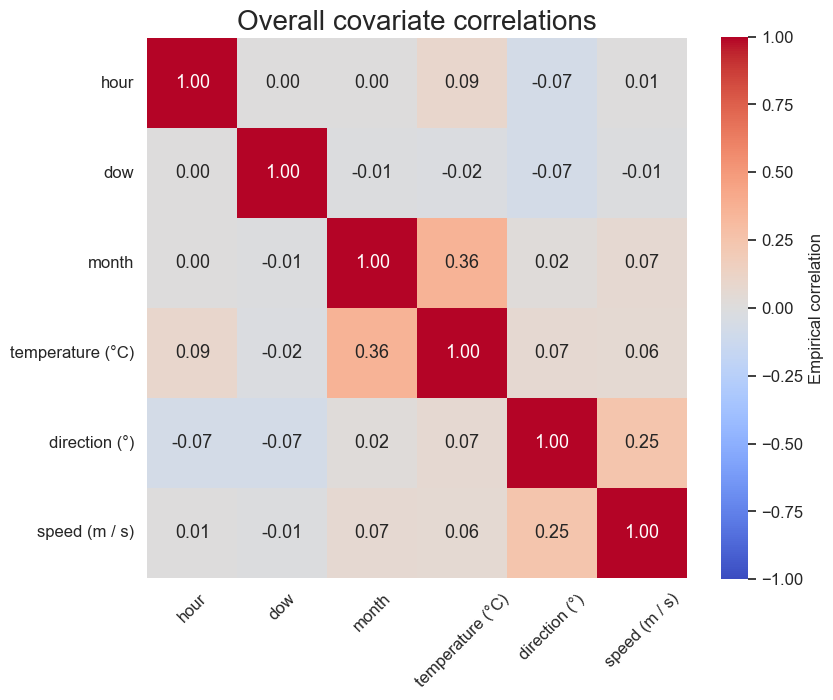}
\caption{Empirical covariate correlations in the complete-case sample.}
\label{fig:overall_covariate_correlations}
\end{figure}

For each posterior draw, we then compute the correlation matrix of the covariates restricted to the observations belonging to the corresponding halo. The entries in Figure~\ref{fig:halo_covariate_correlations_90} report the posterior median of these draw-specific correlations.

\begin{figure}[t]
\centering
\includegraphics[width=\linewidth]{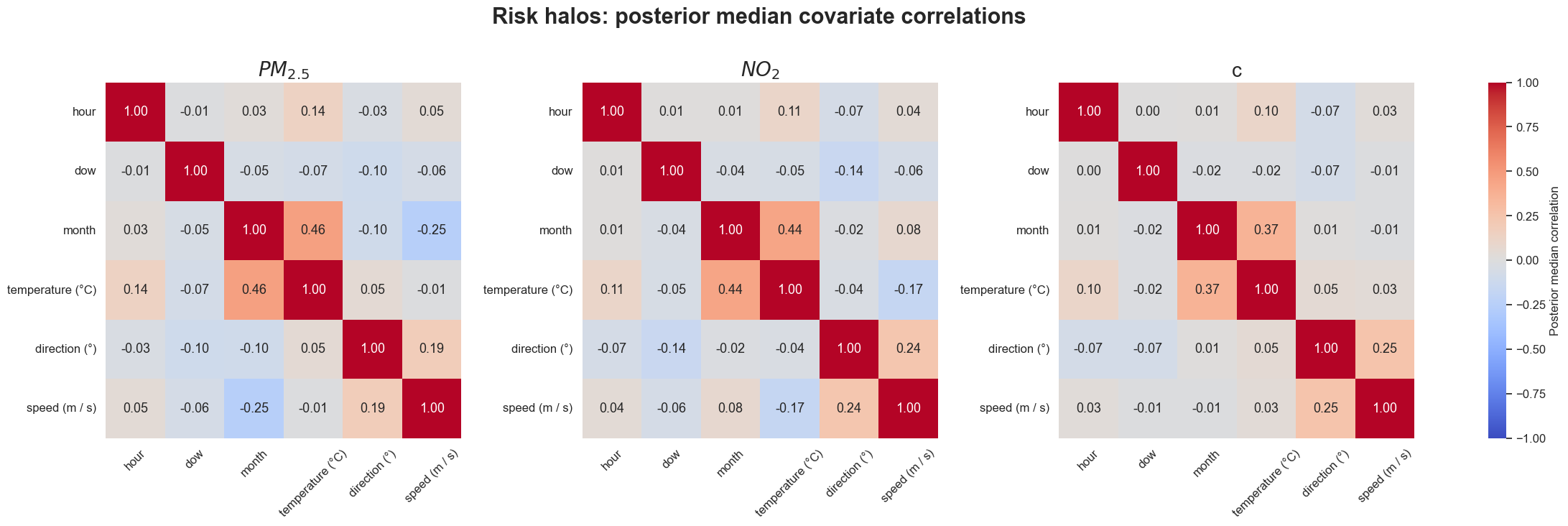}

\includegraphics[width=\linewidth]{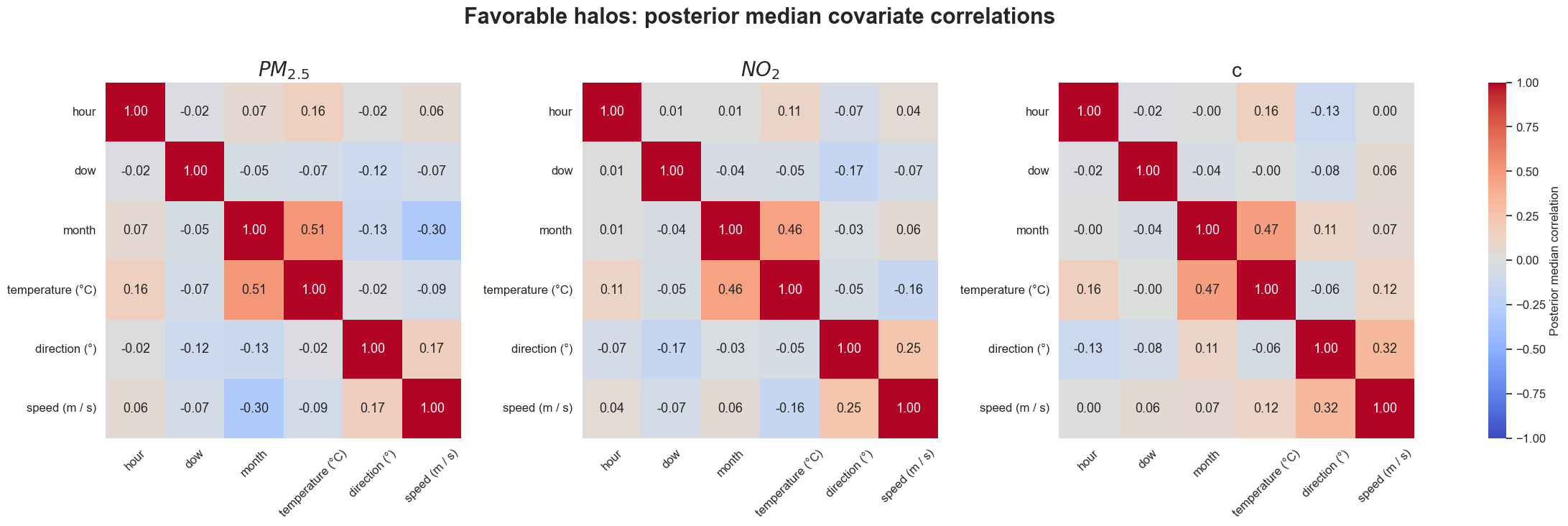}

\caption{Posterior median covariate correlations within the fitted halo representations for the analysis with $\alpha=0.90$. Top: risk halos. Bottom: favorable halos. For each posterior draw, the correlation matrix is computed using the covariate values of the observations belonging to the corresponding draw-specific halo; entries show the posterior median correlation.}
\label{fig:halo_covariate_correlations_90}
\end{figure}

The overall covariate correlations are generally weak, with the clearest background associations being the positive correlation between month and temperature and the positive correlation between wind direction and wind speed. These associations persist inside the fitted halos, but their magnitudes change after conditioning on halo membership. In particular, the month--temperature correlation is stronger inside the halos than in the full sample, especially for the PM$_{2.5}$ favorable halo and the joint favorable halo. This suggests that tail-relevant covariate regions accentuate the seasonal temperature structure present in the background data.

The association between wind direction and wind speed is also visible in the full sample and remains present within the halos. Its magnitude is broadly similar for the NO$_2$ and joint risk halos, while it becomes somewhat stronger in the joint favorable halo. By contrast, for PM$_{2.5}$ halos the most notable change is the negative association between month and wind speed, which is weakly positive in the full sample but becomes negative inside both the risk and favorable halos. This indicates that the PM$_{2.5}$ halo regions are not only characterized by marginal shifts in month and wind speed separately, but also by a different seasonal--wind-speed configuration.

Overall, the correlation patterns remain weak to moderate. This suggests that the fitted halo densities are mainly explained by changes in marginal covariate profiles rather than by strong pairwise dependence among covariates inside the halos. Nevertheless, the comparison with the full-sample correlation matrix shows that conditioning on halo membership can modify specific associations, most clearly for month and temperature, month and wind speed, and wind direction and wind speed.

Since several covariates are cyclic, these Pearson correlations should be interpreted as exploratory summaries rather than as complete measures of circular dependence.

\subsubsection{Sensitivity to the risk-halo level $\alpha$}

\begin{figure}[t]
  \centering
  \includegraphics[width=0.7\linewidth]{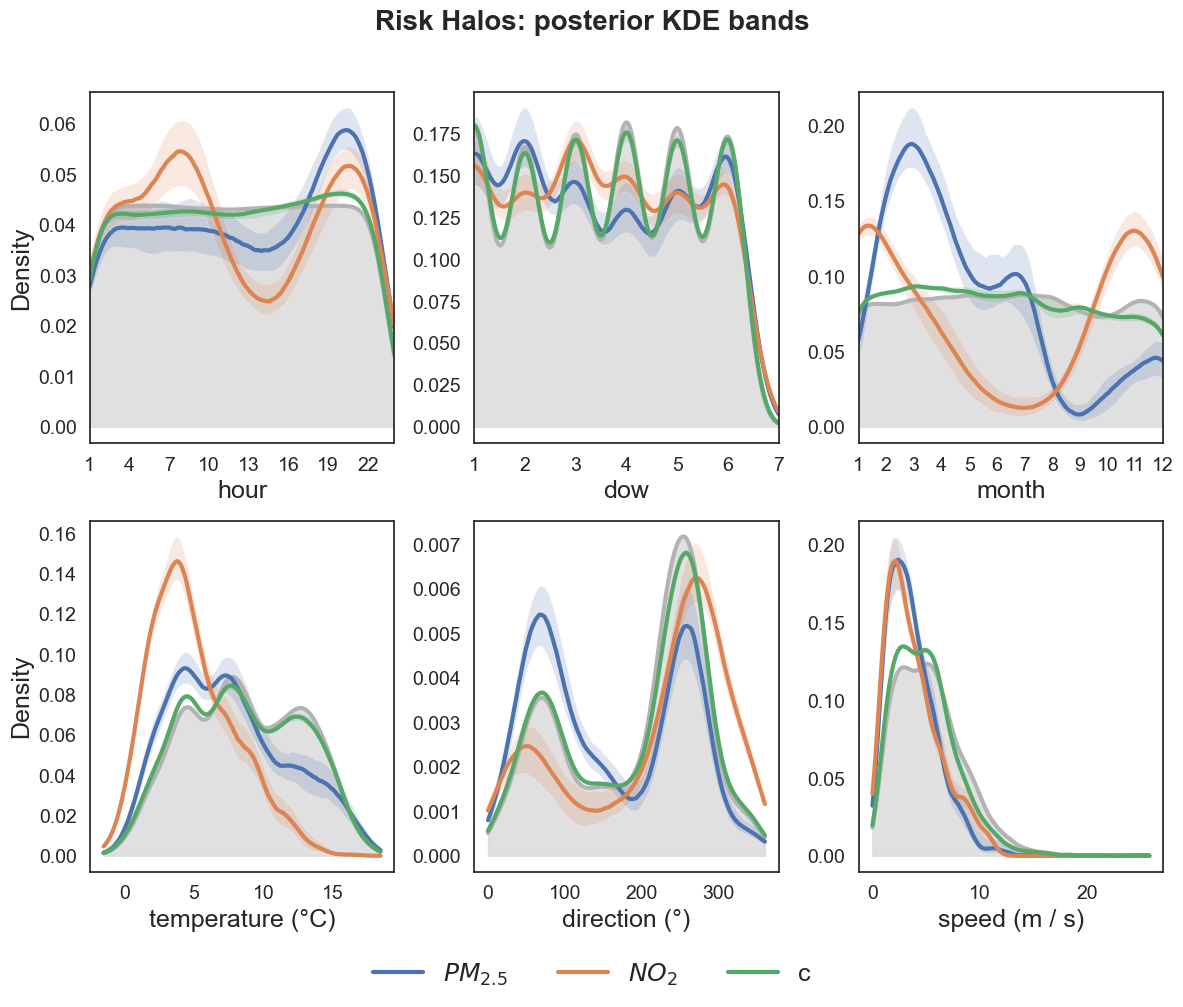}
  \caption{Posterior risk-halo densities for PM$_{2.5}$, NO$_2$, and their joint tail for $\alpha=0.85$ (colored curves with shaded credible bands), compared with the overall covariate distribution (gray).}
  \label{fig:h_risk_app_85}
\end{figure}

\begin{figure}[t]
\centering
\includegraphics[width=\linewidth]{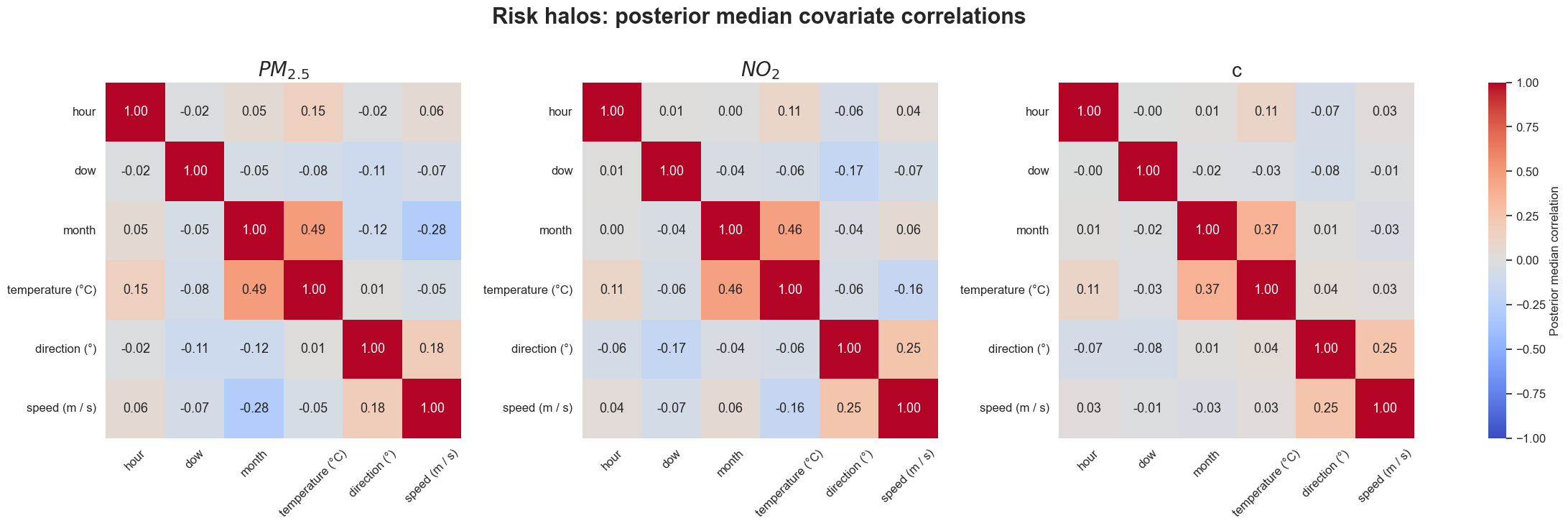}
\caption{Posterior median covariate correlations within the fitted risk halos for $\alpha=0.85$. For each posterior draw, the correlation matrix is computed using the covariate values of the observations belonging to the corresponding draw-specific risk halo; entries show the posterior median correlation.}
\label{fig:risk_halo_covariate_correlations_85}
\end{figure}

\begin{figure}[t]
  \centering
  \includegraphics[width=0.7\linewidth]{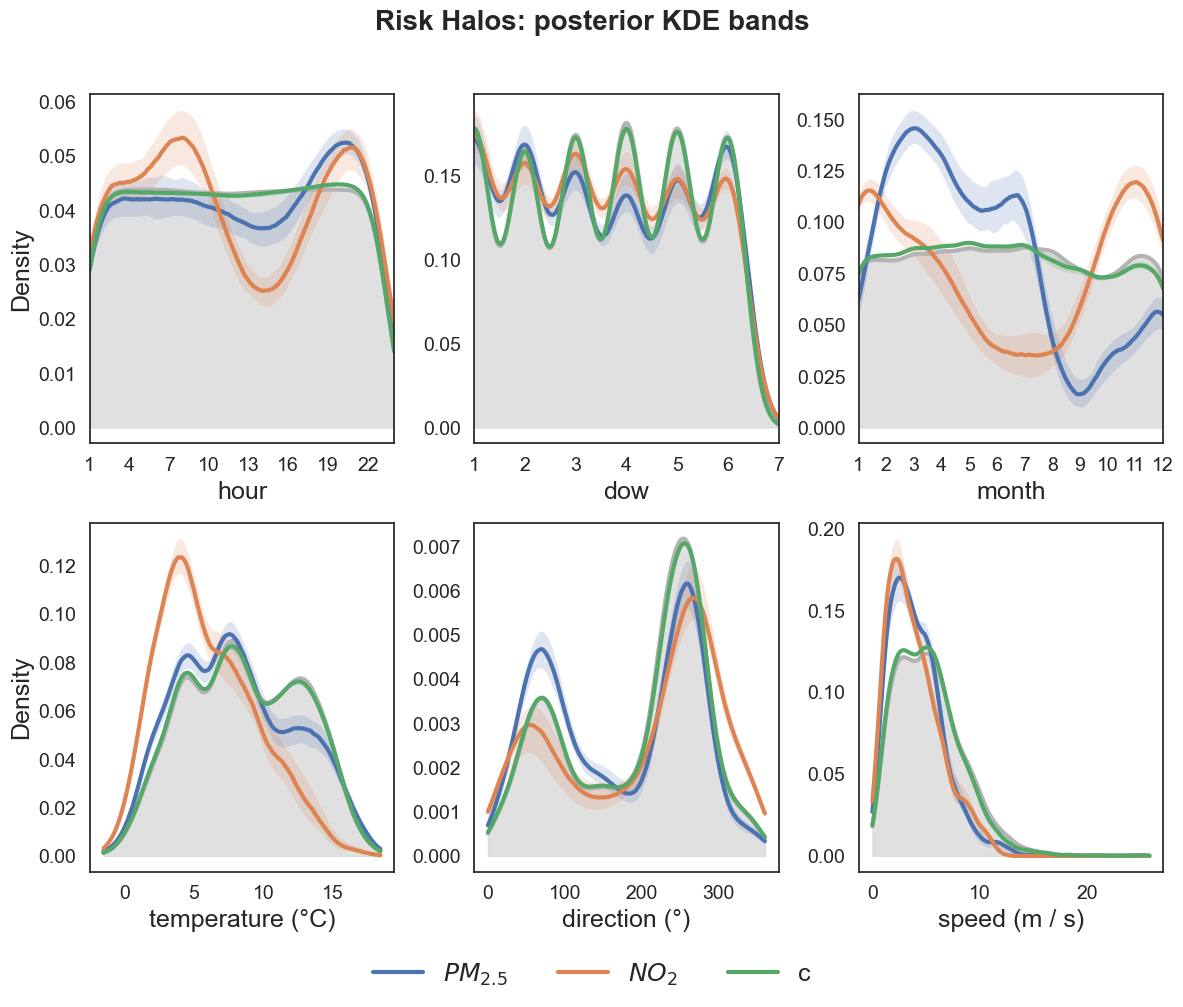}
  \caption{Posterior risk-halo densities for PM$_{2.5}$, NO$_2$, and their joint tail for $\alpha=0.95$ (colored curves with shaded credible bands), compared with the overall covariate distribution (gray).}
  \label{fig:h_risk_app_95}
\end{figure}

\begin{figure}[t]
\centering
\includegraphics[width=\linewidth]{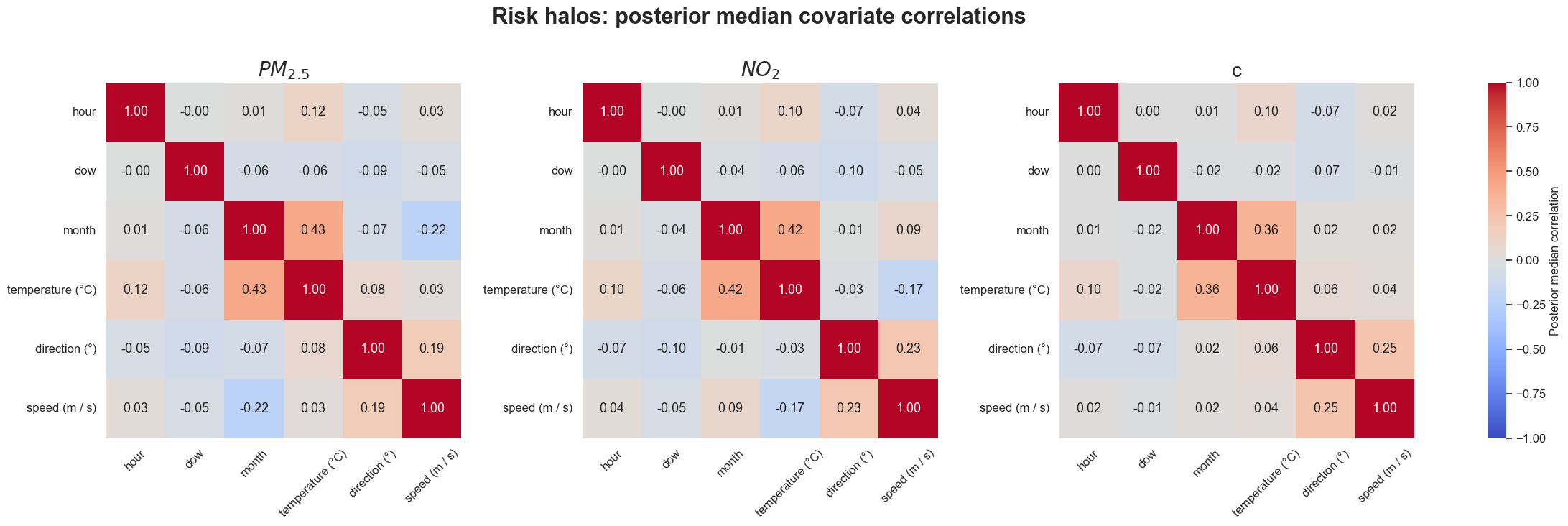}
\caption{Posterior median covariate correlations within the fitted risk halos for $\alpha=0.95$. For each posterior draw, the correlation matrix is computed using the covariate values of the observations belonging to the corresponding draw-specific risk halo; entries show the posterior median correlation.}
\label{fig:risk_halo_covariate_correlations_95}
\end{figure}

\end{document}